\DeclareMathOperator{\maj}{maj}
\theoremstyle{plain}
\theoremstyle{definition}
\newtheorem{defn}{Definition}
\newtheorem{prob}{Problem}
\theoremstyle{remark}
\newtheorem{theorem}{Theorem}[section]
\newtheorem{corollary}{Corollary}[theorem]
\newtheorem{lemma}[theorem]{Lemma}
\theoremstyle{definition}
\theoremstyle{remark}
\newcommand{\eq}[1]{Eq.~\hyperref[eq:#1]{(\ref*{eq:#1})}}
\renewcommand{\sec}[1]{\hyperref[sec:#1]{Section~\ref*{sec:#1}}}
\newcommand{\app}[1]{\hyperref[app:#1]{Appendix~\ref*{app:#1}}}
\newcommand{\tab}[1]{\hyperref[tab:#1]{Table~\ref*{tab:#1}}}
\newcommand{\fig}[1]{\hyperref[fig:#1]{Figure~\ref*{fig:#1}}}
\newcommand{\figa}[2]{\hyperref[fig:#1]{Figure~\ref*{fig:#1}#2}}
\newcommand{\figx}[2]{\hyperref[fig:#1]{Figure~\ref*{fig:#1}(#2)}}
\newcommand{\cor}[1]{\hyperref[cor:#1]{Corollary~\ref*{cor:#1}}}
\newcommand{\alg}[1]{\hyperref[alg:#1]{Algorithm~\ref*{alg:#1}}}
\title{Unitary Quantum Cellular Automata for Density Classification}
\author[1,4]{Pedro C. S. Costa\thanks{Email: pcs.costa@protonmail.edu} }
\author[2]{Yuval R.~Sanders}
\author[3]{Pedro Paulo Balbi}
\author[4]{Gavin K. Brennen}
\affil[1]{\small{ContinoQuantum, Sydney, NSW 2093, AU}}
\affil[2]{\small{
Centre for Quantum Software and Information,
University of Technology Sydney, Sydney, NSW 2007, AU}}
\affil[3]{\small{Universidade Presbiteriana Mackenzie - FCI, Rua da Consolação 896, 01302-907 São Paulo, SP, Brazil}}
\affil[4]{\small{School of Mathematical and Physical Sciences,
Macquarie University, Sydney, NSW 2109, Australia}}
\date{}
\begin{document}

\maketitle	
%%%%%%%%%%%%%%%%%%%%%%%%%%%%%%%%%%%%%%%%%%%%%
%%%%%%%%%%%%%Abstract%%%%%%%%%%%%%%%%%%%%%%%%
%%%%%%%%%%%%%%%%%%%%%%%%%%%%%%%%%%%%%%%%%%%%%%

\begin{abstract}
We investigate the density classification task (DCT)—determining the majority bit in a one-dimensional binary lattice—within a quantum cellular automaton (CA) framework. While there is no one-dimensional two-state, radius $r\geq 1$, deterministic CA with periodic
boundary conditions that solves the DCT perfectly, we explore whether a unitary quantum model can succeed. We employ the Partitioned Unitary Quantum Cellular Automaton (PUQCA), a number-conserving model, and via evolutionary search find solutions to  the DCT where the success condition is stipulated in terms of measurement probabilities rather than convergence to fixed-point configurations. Finally, we identify a classically simulable regime of the PUQCA in which rules that solve the DCT at fixed system sizes still.

\end{abstract}

\tableofcontents

\section{Introduction}

Given a cyclic, one-dimensional regular lattice where each element is assigned a binary state, determining which state occurs more frequently is a trivial task for a classical computer with global access to all elements. This task is known as the \textit{density classification task} (DCT), or the \textit{global majority problem}~\cite{Fates,PPdens,soldens}. However, if we impose the restriction that the program may only access local information—i.e., each lattice element can interact only with a small group of neighbours at any given time—the problem may even become unsolvable~\cite{PPdens}.

This extreme limitation occurs when additional constraints are introduced: \textit{locality} (local interactions only), \textit{space and time homogeneity} (all elements follow the same update rule across time), \textit{success condition} (computation halts when a predefined global agreement is reached), and \textit{deterministic synchrony} (all elements update simultaneously and deterministically). Cellular automata (CAs)~\cite{CAbook} naturally embody all of these restrictions. In this model, a local transition function \( f \) updates the state \( a_i(t) \in \Sigma \) of each cell based solely on the configuration of its local neighborhood; for example, 
$a_i(t+1) = f\left[a_{i-1}(t), a_i(t), a_{i+1}(t)\right],$ for all $i$, where \( i \) denotes the cell position in a one-dimensional (usually cyclic) lattice, \( t \in \mathbb{N} \) is the discrete time step, and \( \Sigma = \{0, 1\} \) is the binary alphabet.

All the restrictions described above are satisfied within this framework. Regarding the success condition required for DCT, the global agreement should emerge from local interactions and depend on the initial configuration. In particular, starting with a configuration of odd size and initial density \( \rho \) of 1s such that \( \rho > 1/2 \) (or \( \rho < 1/2 \)), the transition function \( f \) should lead the system to a stationary configuration of all 1s (or all 0s), typically after \( M \) time steps, where \( M \) is proportional to the system size \( L \).

However, it was proven in~\cite{NoSolclass} that no deterministic, binary CA rule with radius \( r \geq 1 \) can correctly classify all possible cyclic configurations. Consequently, various attempts have been made to identify near-optimal rules. Two of the most effective rules to date were found in~\cite{wolz2008very}, achieving approximately 89\% accuracy for \( 5 \times 10^5 \) randomly generated configurations of size \( L = 149 \), sampled from a Bernoulli distribution. Such probabilistic sampling is necessary due to the exponential growth in the number of configurations with \( L \), i.e., \( |\Sigma|^L \), and the fact that densities near \( \rho = 1/2 \)—the hardest to classify—occur more frequently in Bernoulli samples. While manually designed rules such as the Gács-Kurdyumov-Levin rule~\cite{gkl} exist, evolutionary methods, especially \textit{genetic algorithms} (GAs)~\cite{wolz2008very,mitchell1994evolving}, have yielded the best-performing rules to date, and we adopt this approach in our study as well.

With the advent of quantum computation, it is natural to ask whether quantum models can tackle the density classification problem. Although several quantum versions of cellular automata (QCA) have been proposed~\cite{QCAPerez,QCAmeyer,Watrous,Costa2018}, we found only one direct attempt to address DCT, presented in~\cite{Qgenetic}, where Meyer's QCA~\cite{QCAmeyer} was employed. While this approach may in principle scale to larger lattice sizes, the study was limited to inputs of length 5. Additionally, the formulation of inputs and the success condition in that work do not align well with standard classical treatments of DCT. In Ref~\cite{QCADenClass}, a unitary QCA was introduced to solve for density-classification with an update rule based on the known elementary Wolfram CA rule 232. In order to implement that QCA using local gates it was found necessary to work with a quasi-1D lattice. That model is not a perfect classifier, but it was shown to be useful as a way to efficiently perform measurement-free quantum error correction (MFQEC) for bit-flip channels.

An alternative approach was proposed in~\cite{wagner2024density}, where DCT was explored using a non-unitary quantum cellular automaton inspired by classical CA rules. They demonstrated that the majority voting rule can solve the DCT time scaling linearly with the system size. However, whether a unitary QCA can solve DCT remains an open question.

In this work, we address the question using a unitary quantum cellular automaton model, the \emph{Partitioned Unitary Quantum Cellular Automaton} (PUQCA), introduced in~\cite{Costa2018}. As shown there, PUQCA compactly captures several quantum walk models that are usually presented in abstract Hilbert space terms, in contrast to PUQCA, which admits a direct gate model interpretation. The framework is also well suited for simulating relativistic fermions—recovering the Dirac equation in the continuum limit~\cite{costa2021quantum}—and for providing a quantum analogue of classical partitioned cellular automata (PCA)~\cite{ToffoliMargolus,CostaM20}.
 Within the PUQCA framework, we have an additional constraint: the total excitation (i.e., the number of 1s) is preserved throughout the evolution. However, in PUQCA, since the evolution is unitary and preserves the number of excitations, the system cannot reach a stationary state. As a result, the traditional success condition for using cellular automata in density classification problems must be reformulated. We address this by redefining the success condition in terms of the probability distribution over configurations, rather than deterministic convergence.

A significant distinction between classical and quantum approaches is that, in classical methods, different initial configurations converge in various numbers of steps. In contrast, in our quantum model, the correct output for all configurations appears at the same fixed time step, at local lattice points.

In this framework, the rules of the QCA are determined by the unitary operations of the PUQCA. Unlike classical CA, which has a finite rule space, the quantum rule space is continuous (e.g., parametrised by rotation angles), making exhaustive exploration infeasible. Moreover, a rule that successfully classifies for one system size may not generalise to other sizes due to the lack of convergence. 

We employ a genetic algorithm to search for parameterised unitaries that solve the density classification task for a given lattice size. We also extend the search to identify a rule that performs well across multiple lattice sizes. However, since each grid point in the PUQCA corresponds to a qubit, the computational cost of simulating the system increases rapidly with system size, which limits our experiments to relatively small configurations.

On the other hand, we also show under which conditions the PUQCA becomes classically simulable. Within this regime, although we could not identify a rule that generalises as well as the PUQCA, which is not in the classical simulatable regime across different system sizes, we did find rules capable of solving the DCT for fixed lattice sizes.

The paper is organised as follows: \Cref{sec:background} reviews the density classification task  and the partitioned unitary quantum cellular automaton  used to address it. Section~\ref{sec:approach} presents our methodology for solving the DCT and establishes a lower bound for the fitness function. In \Cref{sec:transl_invaDct}, we leverage the translational invariance of the PUQCA to show that, in principle, there exists a quantum state that solves the task. Section~\ref{sec:res1} then introduces a first set of QCA rules—obtained via a genetic algorithm—that successfully solve the DCT. Section~\ref{sec:perf} shows how the PUQCA can be brought into an efficiently classically simulatable regime, which we then exploit to apply it to the DCT.  Finally, \Cref{sec:conc} concludes with a summary of our findings, their implications for density classification with QCAs, and directions for future work.

\section{Background}
\label{sec:background}

Our paper investigates the ability of quantum cellular automata to accomplish
the density classification task. In this section, we explain 
(\Cref{sec:background/PUQCA}) the specific kind of quantum cellular automata
we consider (so-called \emph{partitioned unitary} QCA). We also explain
(\Cref{sec:background/density_classification}) the density classification task
and how a cellular automaton can accomplish that task.

\subsection{One-dimensional partitioned unitary QCA}
\label{sec:background/PUQCA}

Intuitively, a cellular automaton consists of three components:
(1) a grid of points, 
(2) an initial labelling of each point as either `on' or `off', and 
(3) a transition function \( f \) such that, given an on/off labelling at time step \( t \in \mathbb{N} \), applying \( f \) produces a new labelling at time step \( t + 1 \). The transition function is expected to be both local and translationally invariant, according to a suitable notion of locality defined over the grid.

A \emph{quantum} cellular automaton extends this concept by allowing the point labels to exist in a global superposition of on/off states. In this work, we assume the transition function is specified by a unitary operator, which, in particular, implies that the evolution is linear and reversible.

Here, we restrict our attention to one-dimensional QCAs. That is, we consider a one-dimensional grid of $n$ points labelled by integers $k = 0, \ldots, n-1$, where each point may be in a superposition of the states \(\ket{0}\) (off) and \(\ket{1}\) (on). A complete configuration of the grid can then be represented by an \( n \)-bit string \( \bm{b} = (b_0, \dots, b_{n-1}) \), corresponding to the quantum state $\ket{\bm{b}} \coloneqq \ket{b_0} \otimes \cdots \otimes \ket{b_{n-1}}.$
The global state of the QCA at time $t$ can be written as
\begin{equation}
  \ket{\psi(t)} = \sum_{\ell = 0}^{2^n - 1} \alpha_\ell \ket{\ell},
\end{equation}
where \( \alpha_\ell \in \mathbb{C} \) and \( \sum_\ell |\alpha_\ell|^2 = 1 \). Here, \( \ket{\ell} \) represents the basis state associated with the binary encoding of the integer \( \ell \), i.e., \( \ell = \sum_k b_k 2^{n-k} \).

We require the transition function to be local in the sense that the unitary operator can be decomposed into a product of unitaries, each acting non-trivially only on qubits \( k-1 \), \( k \), and \( k+1 \) (modulo \( n \)) for some \( k \). In particular, we focus on a special subclass known as \textit{partitioned unitaries}, defined as:
\begin{equation}
  \left(
    \sum_{k=1}^n \ket{0}_{k+1}\bra{0}_{k} +
    \ket{1}_{k+1}\bra{1}_{k}
  \right) U_1
  \left(
    \sum_{k=1}^n \ket{0}_{k}\bra{0}_{k+1} +
    \ket{1}_{k}\bra{1}_{k+1}
  \right) U_0,
\end{equation}
where \( U_0 = W_0^{\otimes n/2} \), \( U_1 = W_1^{\otimes n/2} \), and \( W_0 \), \( W_1 \) are \( 2 \times 2 \) unitary operators. All indices are taken modulo \( n \), and we assume \( n \) is even. Additionally, we require that \( W_0 \) and \( W_1 \) conserve particle number, meaning $
W_0 \ket{11} = W_1 \ket{11} = \ket{11}, \quad \text{and} \quad W_0 \ket{00} = W_1 \ket{00} = \ket{00}.$

In this work, we formulate the density classification task within the quantum framework of PUQCA and search for unitary transition functions that solve it. As in the classical setting, we restrict our study to one-dimensional lattices and ask: are there unitary operators that solve the DCT after $T$ time steps? Our goal is to evaluate the computational power of PUQCA in this setting and to assess the effectiveness of a genetic algorithm in searching over the continuous space of unitary operations. Unlike the classical case—where the space of rules is finite—the quantum case involves a continuous (and infinite) parameter space, making the problem more challenging.

To fairly evaluate the performance of the PUQCA, we require a classical baseline. Classical CAs with transition functions \( f: \Sigma^{|\mathcal{N}|} \to \Sigma \), where \( \Sigma \) is the finite alphabet and \( \mathcal{N} \) is the neighborhood, are typically not reversible. By contrast, unitary operators are inherently reversible. For this reason, we use \emph{partitioned cellular automata} (PCA)~\cite{ToffoliMargolus,CostaM20}—which conserve particle number and can be made reversible via permutations—as a more suitable classical baseline.

Since PCA also conserves particle number, it cannot rely on classical halting via convergence to homogeneous configurations. Instead, we define the output of the computation through a local measurement at a predetermined site.

Because we employ a one-dimensional lattice for the PUQCA, two tilings are sufficient to generate non-trivial dynamics. The transition function is composed of two layers:
\begin{equation}
\mathcal{E} = \left(\bigotimes_{T_{i}^{(1)} \in \mathcal{T}_1} W_1\right) \left(\bigotimes_{T_{i}^{(0)} \in \mathcal{T}_0} W_0\right), \label{eq:TransitionFunc}
\end{equation}
where each  $W_j \colon \mathbb{C}^2 \otimes \mathbb{C}^2 \rightarrow \mathbb{C}^2 \otimes \mathbb{C}^2$, for \( j \in \{0,1\} \), acts on neighboring cells. We use the shorthand \( \mathcal{E}(W_1, W_0) \) to denote this evolution. The first tiling acts on even sites: $T^{(0)}_i = \{i, i+1\}$ with \( i = 0,2,\ldots,n-2 \), and the second tiling acts on odd sites: \( T^{(1)}_i = \{i, i+1\} \) with \( i = 1,4,\ldots,n-1 \), all modulo $n$, see Figure \ref{fig:PUQCA_rep} for what it would be the quantum circuit for the one-dimensional PUQCA.

\begin{figure}[t]
\centerline{
\Qcircuit @C=1.2em @R=1.2em {
  & \ustick{\vdots}\qw & \multigate{1}{W_0} & \gate{W_1}   &  \qw   &\cdots & &\multigate{1}{W_0} & \gate{W_1}   & \qw\\
  &\qw & \ghost{W_0}        & \multigate{1}{W_1} & \qw  &  \cdots&&\ghost{W_0}        & \multigate{1}{W_1} & \qw \\
  & \qw                 & \multigate{1}{W_0} & \ghost{W_1}        & \qw   &\cdots &&\multigate{1}{W_0} & \ghost{W_1}        & \qw \\
  & \qw & \ghost{W_0}        & \multigate{1}{W_1} & \qw &\cdots &&\ghost{W_0}        & \multigate{1}{W_1} & \qw  \\
  & \qw                 & \multigate{1}{W_0} & \ghost{W_1}        & \qw  &\cdots && \multigate{1}{W_0} & \ghost{W_1}        & \qw \\
  & \qw & \ghost{W_0}        & \multigate{1}{W_1} & \qw   &\cdots&&\ghost{W_0}        & \multigate{1}{W_1} & \qw \\
  & \qw                 & \multigate{1}{W_0} & \ghost{W_1}        & \qw  &\cdots & &\multigate{1}{W_0} & \ghost{W_1}        & \qw\\
  & \qw & \ghost{W_0}        & \multigate{1}{W_1} & \qw  &  \cdots& &\ghost{W_0}        & \multigate{1}{W_1} & \qw \\
  & \dstick{\vdots}\qw  & \gate{W_0}             & \ghost{W_1}        & \qw  &\cdots &&\gate{W_0}             & \ghost{W_1}        & \qw 
  \gategroup{1}{3}{9}{4}{.8em}{--}\\
  && &  \mathcal{E}(W_1,W_0)&&&&&&
}}
\caption{\label{fig:PUQCA_rep}Quantum-circuit representation of a one-dimensional PUQCA. Each time step is \(\mathcal{E}(W_{1}, W_{0})\): first apply the two-qubit gate \(W_{0}\) on even bonds, then \(W_{1}\) on odd bonds. Layers are drawn as continuing above and below to indicate repetition, and periodic boundary conditions connect the first and last qubits.}
\end{figure}
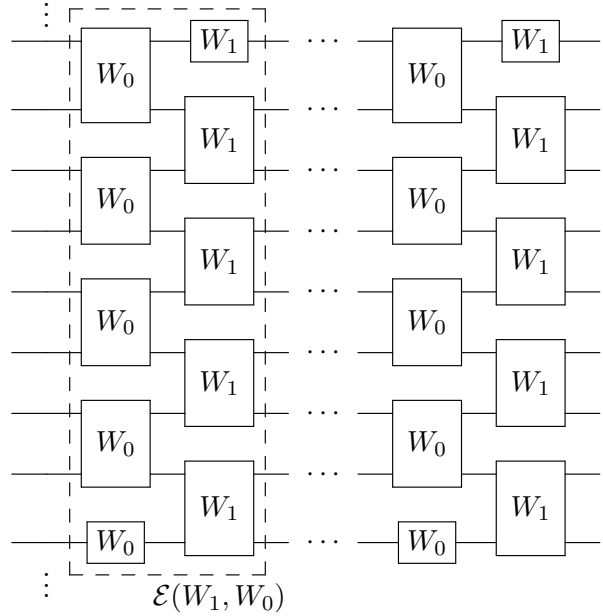

Though each \( W_j \in \mathbb{C}^{4 \times 4} \), particle number conservation implies it acts as identity on \( \ket{00} \) and \( \ket{11} \), and non-trivially only on the subspace spanned by \( \ket{01} \) and \( \ket{10} \). Thus, it effectively takes the form:
\begin{equation}
\label{eq:U2}
\begin{bmatrix}
1 & 0 & 0 & 0 \\
0 & \ast & \ast & 0 \\
0 & \ast & \ast & 0 \\
0 & 0 & 0 & e^{i\phi}
\end{bmatrix},
\end{equation}
where the middle block corresponds to a general \( U(2) \) matrix. The general form of such a \( W_j \) operator can be parameterized by four angles \( (\theta, \alpha, \gamma, \xi) \), as:
\begin{equation}
\label{eq:parametri}
W_j(\theta,\alpha,\gamma,\xi,\phi) =
\begin{bmatrix}
1 & 0 & 0 & 0 \\
0 & e^{i\xi} \cos\theta & e^{i\gamma} \sin\theta & 0 \\
0 & -e^{i(\alpha - \gamma)} \sin\theta & e^{i(\alpha - \xi)} \cos\theta & 0 \\
0 & 0 & 0 & e^{i\phi}
\end{bmatrix},
\end{equation}
where \( j \in \{0,1\} \). For the purposes of our current investigation, we fix \( \phi = 0 \), thereby removing any interaction between the states \( \ket{00} \) and \( \ket{11} \), and simplifying the search for effective transition functions \( \mathcal{E} \).

\subsection{The density classification problem}
\label{sec:background/density_classification}

After the conceptual explanation of density classification, we provide a more rigorous definition in this section to solve the DCT with QCAs. Subsequently, we introduce a method to verify the performance of the transition functions in solving the task.

We start with a few ancillary definitions before providing the main one.

\begin{defn}
Let $n$ be a natural number. The \emph{majority function}
$\operatorname{maj}: \{0, 1\}^n \rightarrow \{-1,0, + 1\}$ is defined so that
\begin{equation}
    \operatorname{maj} \left( b_0, \ldots, b_{n-1} \right) =\begin{cases}
 +1\; , \; \sum_{i=1}^n b_i > n/2 \; \\
 -1\; ,\; \sum_{i=1}^n b_i < n/2 \; \\
 0\; ,\;  \text{Otherwise.} \;
 \end{cases}
\end{equation}
\end{defn}

\begin{defn}
Let $g: \{0, 1\}^n \rightarrow \{-1,0,+1\}$ be a random-valued function.
Then $g$ is said to accomplish the \emph{density classification task}
for $\delta \geq 0$ if for every $n$-bit string $\vb{b} = (b_0, \ldots, b_{n-1})$ with
$\sum_i b_i \neq \frac{n}{2}$ we have for $\operatorname{maj}(\vb{b})=1$,
$\Pr(g(\vb{b}) = \operatorname{maj}(\vb{b})) \geq \frac{1}{2} + \delta$ and $\Pr(g(\vb{b}) = \operatorname{maj}(\vb{b})) \leq \frac{1}{2} + \delta$ for $\operatorname{maj}(\vb{b})=-1$ with $\delta \neq 0$. In the case that we have $\delta=0$ the previous inequalities become strict. We refer to the function $g$ as the \emph{guessing} function.
\end{defn}

\begin{defn} \label{defn:density_classifier}
Let $\mathcal{A}$ be an $n$-cell PUQCA. We say that $\mathcal{A}$ is a
\emph{density classifier} for location $p$ at time $t$ if the following
procedure computes a function that solves the density classification task.
\begin{enumerate}
  \item Encode the $n$-bit input string into an $n$-qubit quantum register.
  \item Apply $\mathcal{A}$ to the quantum register $t$ times.
  \item Measure the $p^\text{th}$ qubit of the quantum register in
  the computational basis.
\end{enumerate}
\end{defn}

Based on the above definition, we see that a PUQCA $\mathcal{A}$
is a density classifier for location $p$ at time $t$ if the
result of the measurement, represented by the
output $g_{p, t} (\vb{b})$ of the random-valued function $g_{p, t}$,
is likely to agree with the value of $\maj(\vb{b})$, that is, the PUQCA $\mathcal{A}$ is a density classifier for location $p$
at time $t$ if $g_{p, t}$ accomplishes the density classification task.

\begin{prob}
\label{prob}
Let $n \in \mathbb{N}$ and $p \in \{0, 1, \ldots, n-1\}$.
Find an $n$-cell PUQCA $\mathcal{A}$ and a natural number $t$, such that
$\mathcal{A}$ is a density classifier for location $p$ at time $t$.
\end{prob}
The key to validating the success of the transition function at a particular time is our definition of the fitness function, presented next.

%%%%%%%%%%%%%%%%%%%%

%%%%%%%%%%%%%%%%%%%%%%%%%%%%%%%%%%%%%%%%%%%%%%%%%%%%%%%%%%%%%
\section{Approach to the DCT}\label{sec:approach}

\subsection{The fitness function}

Outcomes in quantum physics are probabilistic in nature.  Our  goal here is thus to propose an evaluation criterion that correlates the probabilities of measurements with the density in the initial state. 

The idea is simple. After we have mapped an $n$-bitstring $b_{n-1}\cdots b_1 b_0$ to an $n$-qubit system $\ket{b_{n-1}\cdots b_1 b_0}$,
we choose a location $p$ where the measurement will be performed.

We define the \emph{guessing} function $g_{p, t}:\{0,1\}^n\rightarrow \{0,1\}$ to be the result of the following three steps of \Cref{defn:density_classifier}: 
 \begin{equation}
 \label{eq:guess}
 g_{p,t}(b)=\begin{cases}
 +1\; , \; \left|\bra{1}_{ p} \mathcal{E}^t \ket{b}\right|^2\geq 1/2+\delta \; \\
 -1\; ,\;  \left|\bra{1}_{ p} \mathcal{E}^t \ket{b}\right|^2\leq 1/2-\delta \; \\
 0\; ,\;  \text{Otherwise.} \;
 \end{cases}
 \end{equation}
where the case with $\delta=0$ the inequalities above become strict.

Given that  now we have a criterion to guess the density of a given string, we  define the \emph{fitness function} over a set $\mathcal{B}\subseteq \{\{0,1\}^n |\sum_ib_i \neq n/2\}$ of strings, $F_\delta:\mathcal{B}\rightarrow \{0,1\}$. To do that we rely on the normalised mean absolute error, as follows:
\begin{equation}
\label{eq:Fit}
F_\delta(\mathcal{B})=1-\frac{1}{2|\mathcal{B}|}\sum_{b\in\mathcal{B}}\left|g_\delta(b)-\maj(b)\right|,
\end{equation}
where $|\mathcal{B}|$ is number of strings in $\mathcal{B}$. If we correctly guess the densities of all strings in $\mathcal{B}$, then $F_\delta(\mathcal{B})=1$. If a wrong guess is made for all strings in $\mathcal{B}$, then $F_\delta(\mathcal{B})=0$.

\subsubsection{A Lower bound for $F$}
\label{sec:special}

In this subsection, we derive a lower bound for the fitness function defined in \Cref{eq:Fit}. This bound corresponds to a special case in which, for all \( t \neq 0 \), the transition function does not create any superposition and acts purely as a permutation operator. As we will show, under these circumstances, the fitness calculation for \( t \neq 0 \) is effectively equivalent to that of \( t = 0 \). This case is analogous to employing the  PCA, as described in~\cite{CostaM20}, and therefore serves as a reference for contrasting the classical and quantum models. For clarity, we denote this classical fitness as \( F_C \), with the subscript \( C \) indicating "classical."

In such simple instances, we can exactly count the number of cases in which the measurement outcome leads to incorrect information about the global configuration. Our starting point is a reformulation of \Cref{eq:Fit} as:
\begin{equation}
\label{eq:fc}
F_C = 1 - \frac{\mathcal{W}}{T},
\end{equation}
where \( \mathcal{W} \) is the number of incorrect classification outcomes over a total of \( T \) initial states. 

To illustrate the computation of \( \mathcal{W} \), we begin with simple examples and generalize the result.

\paragraph{Zero excitations.} If all bits are zero, the measurement will always yield the correct classification (majority of 0s), so \( W = 0 \).

\paragraph{One excitation.} For configurations with a single 1 (i.e., \( b_i = 1 \) for some \( i \)), there are exactly \( n \) such configurations. If the measurement is made at position \( p \in \{0, \dots, n-1\} \), the only configuration that results in an incorrect outcome is the one where the excitation is located at position \( i = p \). Therefore, there is exactly one wrong classification, \( \mathcal{W} = 1 \).

\paragraph{Two excitations.} For states with two 1s (i.e., \( b_i = b_j = 1 \), \( i \neq j \)), there are \( \binom{n}{2} \) such configurations. The wrong outcomes occur when one of the excitations is located at the measurement position \( p \), and the other is elsewhere. Thus, there are \( n - 1 \) such cases. Due to the symmetry of the setup, choosing either excitation to be at \( p \) leads to the same count of errors, confirming the total as \( W = n - 1 \).

\paragraph{Three excitations.} For states with three 1s (i.e., \( b_i = b_j = b_\ell = 1 \), \( i \neq j \neq \ell \)), the number of incorrect outcomes corresponds to the number of configurations where one excitation is at the measurement site \( p \), and the remaining two are placed elsewhere. The count is thus \( \binom{n - 1}{2} \) erroneous configurations.

From these patterns, the general formula for the number of wrong classifications \( W \) for the entire space of configurations is:
\begin{align}
\mathcal{W} &= 2 \sum_{i = 1}^{n/2 - 1} \binom{n - 1}{i - 1}\nonumber\\
&=2^{n-1}-2\binom{n - 1}{n/2 - 1},
\end{align}
where the factor of 2 accounts for the symmetry in classification errors—i.e., configurations with more 0s than 1s contribute analogously once the number of 1s exceeds \( n/2 \).

Substituting into \Cref{eq:fc}, we obtain the lower bound for the fitness function for the quantum classifier:
\begin{equation}
\label{eq:Fc}
F \geq F_C.
\end{equation}

The table below presents the lower bounds given by \Cref{eq:Fc} for the fitness function defined in \Cref{eq:Fit}, evaluated for different system sizes of the quantum classifier.

\begin{table}[H]
\centering
\begin{tabular}{c|c}
\toprule
\textbf{System Size \( n \)} & \textbf{Lower Bound \( F_C \)} \\
\midrule
4  & 0.8000 \\
6  & 0.7273 \\
8  & 0.6882 \\
10 & 0.6632 \\
12 & 0.6456 \\
14 & 0.6325 \\
16 & 0.6222 \\
\bottomrule
\end{tabular}
\caption{Lower bounds for the fitness values \( F \) for different values of \( n \)\label{tab:Lowerb}}
\end{table}

\subsection{The evolutionary search algorithm for the DCT}

As introduced earlier, we employ an evolutionary search strategy—implemented as a \textit{genetic algorithm} — to find effective PUQCA-based classifiers for the density classification task (DCT). Once the parameters of the system are specified—namely, the number of qubits \( n \), the number of time steps \( t \), the value of \( \delta \) in \Cref{eq:guess}, and the measurement location \( p \)—the GA searches for configurations that maximize the fitness function \( F_{\delta}(\mathcal{B}) \in [0,1] \). According to the transition function \( \mathcal{E} \) defined in \Cref{eq:TransitionFunc}, the search space is characterized by eight parameters: four angles for each unitary \( W(\theta, \alpha, \gamma, \xi) \).

In the language of evolutionary computation, the parameters under optimization—the angle values—are referred to as \emph{genes}. A complete set of these eight genes constitutes a \emph{chromosome} or \emph{individual}, representing a candidate solution. A collection of individuals forms the \emph{population} \( P_t \), which evolves over \( T \) generations indexed by the time step \( t \). Internally, the population is represented as a matrix of dimensions \texttt{individuals} \(\times\) \texttt{genes}.

For our purposes, the genetic algorithm makes use of only two standard genetic operators: \textbf{selection} and \textbf{mutation}. We omit the crossover operation (i.e., recombination between individuals) in this implementation.

\begin{itemize}
  \item \textbf{Selection}: We use the standard \textit{roulette wheel} selection mechanism to generate a new pool of individuals \( P'_t \), equal in size to the original population. This fitness-proportional method probabilistically selects individuals from \( P_t \) with likelihood proportional to their fitness values—thus favoring better-performing individuals, which may appear more than once in \( P'_t \). In Algorithm~\ref{alg:evol}, the selection process is described by the function \( S: P_t \rightarrow P'_t \), with \( P'_t \subseteq P_t \).

  \item \textbf{Mutation}: The mutation operation perturbs selected genes within the pool \( P'_t \). Since the genes are real-valued angles, mutation is implemented by adding a perturbation drawn from a Gaussian distribution centered at the current gene value. That is, for the \( i \)-th gene \( g_i \), we have:
  \begin{equation}
    g_i \leftarrow g_i + G(\mu, \sigma^2),
  \end{equation}
  where \( G(\mu, \sigma^2) \) is a Gaussian distribution with mean \( \mu = g_i \) and standard deviation \( \sigma \). For each gene in the population, a random number in the interval \([0, 1]\) is generated. If this number is less than the mutation rate \( p_m \), the gene is mutated; otherwise, it remains unchanged. The mutation is thus governed by the parameters \( p_m \) and \( \sigma \), both held fixed during the evolution. The overall mutation process is described by the function \( M_{p_m, \sigma}: P'_t \rightarrow P_t \).
\end{itemize}

\begin{algorithm}[H]
  \caption{Evolutionary Search for Quantum Classifiers}
  \label{alg:evol}
  \begin{algorithmic}[1]
    \State \textbf{Input:} \( (n, \delta, \sigma, p_m, p, T) \)
    \State \( t \leftarrow 0 \)
    \State Initialize a population \( P_t \)
    \State Compute fitness \( F \) for each individual in \( P_t \)
    \While{\( t \leq T \)}
      \State \( P'_t \leftarrow S(P_t) \) \Comment{Selection}
      \State \( P_t \leftarrow M_{p_m, \sigma}(P'_t) \) \Comment{Mutation}
      \State Compute fitness \( F \) for each individual in \( P_t \)
      \If{there exists \( i \in P_t \) such that \( F(P_t(i)) == 1 \)}
        \State \textbf{Break}
      \Else
        \State \( t \leftarrow t + 1 \)
      \EndIf
    \EndWhile
    \State \textbf{Output:} \( P_T(i) \) where \( F(P_T(i)) \geq F(P_T(k)) \) for all \( k \in P_T \)
  \end{algorithmic}
\end{algorithm}

Algorithm~\ref{alg:evol} outlines the evolutionary optimization process. The population is first initialized with individuals—each a tuple of randomly sampled parameters \( (\theta_1, \alpha_1, \gamma_1, \xi_1, \theta_2, \alpha_2, \gamma_2, \xi_2) \in [0, 2\pi]^8 \). Fitness is then evaluated using the function \( F \), and the population evolves over \( T \) generations. During each iteration, selection and mutation are applied sequentially. The algorithm halts early if an individual achieves perfect fitness \( F = 1 \), indicating a valid classifier for the specified task parameters.

%

%%%%%%%%%%%%%%%%%%%%%%%%%%%%%%%%%%%%%%%%%%%%%%%%%%%%%%%%%%%%%%%%%%%%%%%%%%%%%%%%%%%%%%%%%%%%%%%%

\section{Symmetries and existence of PUQCA - based solutions}\label{sec:transl_invaDct}

\subsection{Shift invariance and measurement Equivalence}

Here, we demonstrate how the translational invariance of the PUQCA can, in principle, be utilised to reduce the computational complexity of the DCT. The evolution operator $\mathcal{E}$ of the PUQCA is defined as the product of two unitary operators, each of which is invariant under the transformation
\begin{equation}
    T^2 \ket{b_0}\ket{b_1}\ket{b_2} \ldots \ket{b_{n-3}}\ket{b_{n-2}}\ket{b_{n-1}} 
    := \ket{b_2}\ket{b_3}\ket{b_4} \ldots \ket{b_{n-1}}\ket{b_0}\ket{b_1},
\end{equation}
where \( b_0, b_1, \ldots \) are binary values. This operation corresponds to a two-step cyclic shift, and we note that our PUQCA commutes with this transformation:
\begin{equation}
    [\mathcal{E}, T^2] = 0,
\end{equation}
where the operator \( T \) is defined as the one-step cyclic shift:
\begin{equation}
    T \ket{b_0}\ket{b_1}\ket{b_2} \ldots \ket{b_{n-3}}\ket{b_{n-2}}\ket{b_{n-1}} 
    := \ket{b_1}\ket{b_2}\ket{b_3} \ldots \ket{b_{n-2}}\ket{b_{n-1}}\ket{b_0}.
\end{equation}

When we have a PUQCA on a one-dimensional ring of \( n \) qubits (i.e., the \( n \)-cycle), this translational symmetry implies that the evolution operator satisfies
\begin{equation}
\label{eq:TranslInv}
    [\mathcal{E}, T^{2m}] = 0,
\end{equation}
for all \( m \in \mathbb{N} \), with \( 2m \bmod n \). This symmetry property has implications for solving the density classification task, as we demonstrate in the following two lemmas.

\begin{lemma}[Translation covariance of PUQCA-based density classification]\label{lemma:transl} An $n$-cell PUQCA $\mathcal{A}$ which is a density classifier for a $n$-qubit input system $\ket{b}$ for a location $p$ at time $t$ is also an $n$-cell PUQCA $\mathcal{A}$ for a $n$-qubit input system $\ket{b'}$, such that $\ket{b'}= T^{2m}\ket{b}$, for a location $p\oplus 2m$ at time $t$. 
\end{lemma}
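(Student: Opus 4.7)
The plan is to reduce the lemma to a direct algebraic identity between two outcome probabilities, using only the commutation relation $[\mathcal{E},T^{2m}]=0$ from \Cref{eq:TranslInv} together with the covariance of the single-site projector under the shift. First I would note that $T^{2m}$ is just a site permutation, so the Hamming weight is preserved and $\maj(\vb{b'})=\maj(\vb{b})$; the classification target is therefore unchanged and all the work concentrates on relating the two probabilities that drive the guessing function of \Cref{eq:guess}.

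The core step is to establish how the single-site projector $P_q \coloneqq \ket{1}\bra{1}_q$ (acting as identity on the other $n-1$ qubits) transforms under conjugation by $T^{2m}$. Working directly from the definition of $T$, I would show that $T^{-1} P_q T = P_{q+1 \bmod n}$, and hence by iteration
\begin{equation}
T^{-2m}\,P_{p'}\,T^{2m} \;=\; P_{p'+2m \bmod n}.
\end{equation}
Combining this with $[\mathcal{E}^t,T^{2m}]=0$, which follows by immediate induction from \Cref{eq:TranslInv}, yields
\begin{equation}
\bra{b'}\mathcal{E}^{\dagger t}P_{p'}\mathcal{E}^t\ket{b'}
\;=\; \bra{b}\,T^{-2m}\mathcal{E}^{\dagger t}P_{p'}\mathcal{E}^t T^{2m}\ket{b}
\;=\; \bra{b}\mathcal{E}^{\dagger t}P_{p'+2m}\mathcal{E}^t\ket{b}.
\end{equation}
Choosing $p'$ to match the shifted site $p \oplus 2m$ specified in the lemma then identifies the left-hand side with $|\bra{1}_p\mathcal{E}^t\ket{b}|^2$, the probability that underlies the guessing function at $(p,t)$ on input $\ket{b}$.

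Finally, I would close the argument by inspecting \Cref{eq:guess}: since the two outcome probabilities coincide, the guessing functions $g_{p\oplus 2m,t}(\vb{b'})$ and $g_{p,t}(\vb{b})$ fall into the same branch of $\{+1,-1,0\}$, and combined with $\maj(\vb{b'})=\maj(\vb{b})$ the density classification property transfers directly from $(p,t)$ on $\ket{b}$ to $(p\oplus 2m,t)$ on $\ket{b'}$. The only delicate piece is the direction of the shift in the projector conjugation rule, which depends on whether $T$ is read as a left or right cyclic shift; one must fix this consistently with the paper's definition $T\ket{b_0 b_1\cdots b_{n-1}} = \ket{b_1 b_2\cdots b_{n-1} b_0}$ to decide between $p\oplus 2m$ and $p\ominus 2m$ as the valid target site. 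Beyond this sign bookkeeping, the argument is a routine symmetry computation.
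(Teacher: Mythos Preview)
Your proposal is correct and follows essentially the same route as the paper's proof: insert $T^{2m}(T^{2m})^{\dagger}=I$, use $[\mathcal{E}^t,T^{2m}]=0$ from \Cref{eq:TranslInv}, and read off the shifted measurement site. Your version is in fact slightly more careful than the paper's, since you work with the full single-site projector $P_q$ rather than the informal partial bra $\bra{1}_p$, you make explicit that $\maj(\vb{b'})=\maj(\vb{b})$, and you correctly flag the $p\oplus 2m$ versus $p\ominus 2m$ bookkeeping as the one point that must be fixed by the convention for $T$.
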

\begin{proof}
By saying that $\mathcal{A}$ is a density classifier for an $n$-qubit system for $\ket{b}$ is equivalent to say that the probability
\begin{equation}
    Pr(1_p) = \left|\bra{1}_{ p} \mathcal{E}^t \ket{b}\right|^2,
\end{equation}
solves \Cref{prob} for the input state $\ket{b}$. Now, from the translational invariance of $\mathcal{A}$~\Cref{eq:TranslInv}, we have
\begin{align}
    Pr(1_p) &= \left|\bra{1}_{ p} \mathcal{E}^t \ket{b}\right|^2\nonumber\\
    &=\left|\bra{1}_{ p} (T^{2m})^{\dagger}T^{2m}\mathcal{E}^t(T^{2m})^{\dagger}T^{2m} \ket{b}\right|^2\nonumber\\
    &=\left|\bra{1}_{ p'} T^{2m}\mathcal{E}^t(T^{2m})^{\dagger} \ket{b'}\right|^2\nonumber\\
    &=\left|\bra{1}_{ p'} \mathcal{E}^t \ket{b'}\right|^2\nonumber\\
    &= Pr(1'_{p'}),
\end{align}
where $p'=p \oplus 2m$ for $2m\mod{n}$.
\end{proof}

%%%%%%%%%%%%%%%%%%%%%%%%%%%%%%%%%%%%%%%%%%%%%%%%%%%%%%%%%%%%%%%%%%%%%%%%%%%%%%%%%%%%%%%%%%%%%%%%%%%%
\subsection{Existence of solution}
\label{sub:equaSup}

The aim of this section is to proof that for a given $\delta \neq 0$ the states in quantum mechanics known as \textit{Dicke} states~\cite{Dicke}  
\begin{equation}
\label{eq:Dic}
\ket{D_n^{i}}=\frac{1}{\sqrt{C_n^{i}}}\sum_{j}\mathcal{P}_j\left[\ket{1}^{\otimes i}\otimes \ket{0}^{\otimes (n-i)}\right],
\end{equation}
where $\sum_{j}\mathcal{P}_j\left[\cdot\right]$ denotes the sum over all possible permutations and $C_n^{i}=\binom{n}{i}$ is the binomial coefficient, provide a solution for the classification problem.

\begin{theorem}[Dicke state solutions to the DCT]\label{theo:Dicke}
The Dicke states $\ket{D^{i}_n}$, for  $i\neq n/2$ resolve the classification problem for the $n$-cell PUQCA, where $n$ is an even number and $n>2$.
\end{theorem}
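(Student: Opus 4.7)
The plan is to reduce the claim to a short combinatorial calculation of the single-qubit marginal of $\ket{D_n^i}$, showing that a computational-basis measurement at any site yields outcome $\ket{1}$ with probability exactly $i/n$, and then to verify that this marginal satisfies the guessing-function criterion of \Cref{eq:guess} with a positive margin whenever $i\neq n/2$.

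First, I would count the weight-$i$ basis strings in the uniform superposition \Cref{eq:Dic} that carry a $1$ in a chosen position $p$; this count is $\binom{n-1}{i-1}$. Each summand carries amplitude $1/\sqrt{\binom{n}{i}}$, so
\begin{equation*}
\left|\bra{1}_p \ket{D_n^i}\right|^2 = \frac{\binom{n-1}{i-1}}{\binom{n}{i}} = \frac{i}{n},
\end{equation*}
and this value is independent of $p$ --- a fact consistent with \Cref{lemma:transl} together with the full permutation symmetry of the Dicke state.

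Next, I would feed $i/n$ into the definition of $g_{p,t}$ in \Cref{eq:guess}. Because $n$ is even and $i\neq n/2$, one has $\left|i/n - 1/2\right|\geq 1/n > 0$, so any fixed margin $0<\delta\leq 1/n$ places the measurement probability safely above $1/2+\delta$ when $i>n/2$ and safely below $1/2-\delta$ when $i<n/2$. This forces $g_{p,t}(b)=+1$ in the first case and $g_{p,t}(b)=-1$ in the second, matching $\maj(b)$ for every classical input $b$ of Hamming weight $i$. Hence, if the PUQCA evolution $\mathcal{E}^t$ delivers $\ket{D_n^i}$ at the relevant time, the success criterion of \Cref{prob} is met.

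The main obstacle I foresee is conceptual rather than computational: strict unitarity prevents distinct orthogonal inputs $\ket{b}$ of weight $i$ from sharing a common image, so the theorem should be read as asserting the \emph{existence} of a target output state --- namely $\ket{D_n^i}$ --- whose measurement statistics certify a valid classifier, rather than as a constructive recipe. Clarifying this interpretation, and then noting that the concrete task of engineering $(W_{0},W_{1})$ that drive basis inputs close to Dicke-like single-qubit marginals is exactly what the genetic-algorithm search of \Cref{sec:res1} attacks, will finish the argument.
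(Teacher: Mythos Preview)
Your proposal is correct and follows essentially the same route as the paper: count the $\binom{n-1}{i-1}$ weight-$i$ strings with a $1$ at the chosen site, divide by $\binom{n}{i}$ to obtain the single-qubit marginal $i/n$, and observe that this lies strictly on the correct side of $1/2$ whenever $i\neq n/2$. Your added remarks---making the margin $\delta\le 1/n$ explicit and flagging that the theorem identifies a target state rather than a constructive PUQCA rule---are improvements in precision that match the paper's own post-proof discussion.
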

\begin{proof}
Consider that we have an $n$-cell PUQCA with $n>2$ and the PUQCA state is given by
\begin{equation}
\label{eq:Dic2}
\ket{D_n^{i}}=\sqrt{\frac{i!}{n(n-1)\cdots (n-i+1)}}\sum_{j}\mathcal{P}_j\left[\ket{1}^{\otimes i}\otimes \ket{0}^{\otimes (n-i)}\right],
\end{equation}
where $i<n/2$. Now we compute the probability to measure one excitation in the cell $p$. In order to make this calculation we have to consider all states where there is one excitation located in the cell $p$. These states can be easily estimated observing that for an excitation located in the cell $p$ there are $n-1$ vacancy spaces for the $i-1$ remainders excitations. From this observation we see that there are  $C_{n-1}^{i-1}=\binom{n-1}{i-1}$ states which provide the right answer for $\operatorname{maj}(b)=-1$. Thus, the probability to measures such states is given by
\begin{equation}
    \left|\braket{1_p}{D_n^i}\right|^2=\frac{i!}{n(n-1)\cdots (n-i+1)}\frac{(n-1)\cdots(n-i+1)}{(i-1)!}=\frac{i}{n}.
\end{equation}
Since we are considering the case where $i<n/2$, we conclude that
\begin{equation}
    \norm{\braket{1_p}{D_n^i}}^2<\frac{1}{2}.
\end{equation}
For the cases where $i>n/2$ the prove is completely analogue. But here the correct answers are associated now to $\operatorname{maj}(b)=+1$.
\end{proof}

From \Cref{theo:Dicke}, we conclude that if there exists a transition function \( \mathcal{E}^{(D_n)} \) capable of generating the Dicke state \( \ket{D_n^i} \) from any \( n \)-qubit input state with \( i \) excitations—i.e.,
$\mathcal{E}^{(D_n)t} \ket{b_n^i} = \ket{D_n^i}$,
for all \( i \in \{1, \ldots, \tfrac{n}{2} - 1, \tfrac{n}{2} + 1, \ldots, n - 1\} \), after a fixed number of time steps \( t \), then the corresponding PUQCA constitutes a valid density classifier. Notably, in this scenario, the classification is independent of the specific measurement location on the lattice. However, it remains unclear whether such Dicke states can be generated solely through local unitary operations. Nonetheless, the theoretical existence of these ideal solutions offers valuable guidance for designing transition functions that, at minimum, approximate the evolution toward the corresponding Dicke state for a given initial excitation number.

An important observation is that the relative phases among the basis states of a Dicke state do not affect the marginal distribution of any single qubit. This property enables the construction of a unitary operator that allows density classification to succeed by measuring a single qubit. 

We begin by considering unitaries that conserve the total excitation number, i.e., those that commute with the total spin-$z$ operator:
\begin{equation}
[U, J^z] = 0, \quad \text{where} \quad J^z = \frac{1}{2} \sum_{k=1}^{n} \sigma^z_k.
\end{equation}
This condition ensures that the output state preserves the Hamming weight (or density) of the input configuration.

Let \( \ket{J, M} \) denote the eigenstates of \( J^z \), where \( J = n/2 \) and \( M \in \{-n/2, -n/2 + 1, \dots, n/2\} \). Any unitary \( U \) satisfying the commutation relation above can be expressed in block-diagonal form as
\begin{equation}
U = \bigoplus_{i=0}^{n} U^{(i)},
\end{equation}
where the block \( U^{(i)} \) acts on the subspace with exactly \( i \) excitations (i.e., Hamming weight \( i \)) and has dimension \( \binom{n}{i} \).

To ensure density classification, each block \( U^{(i)} \) must map the classical basis states with \( i \) ones to an orthonormal set of quantum states. A suitable choice is the set of generalized Dicke states \( \{ \ket{D_n^i, k} \} \), which are Dicke states modified by relative phases across the computational basis. Specifically, we define:
\begin{equation}
\ket{D_n^i, k} := Z^k(n, i) \ket{D_n^i},
\end{equation}
where
\begin{equation}
Z^k(n, i) = \sum_{j=0}^{\binom{n}{i} - 1} e^{i \frac{2\pi k j}{\binom{n}{i}}} \ket{j}\bra{j},
\end{equation}
and \( \ket{j} = \mathcal{P}_j \left[ \ket{1}^{\otimes i} \otimes \ket{0}^{\otimes (n - i)} \right] \), i.e., the \( j \)-th distinct permutation of a state with \( i \) ones followed by \( n - i \) zeros. By construction, the states \( \ket{D_n^i, k} \) form an orthonormal basis:
\begin{equation}
\langle D_n^{i}, k|D_n^{i'}, k'\rangle = \delta_{i, i'} \delta_{k, k'}.
\end{equation}

We can now define the block \( U^{(i)} \) as:
\begin{equation}
U^{(i)} = \sum_{k = 0}^{\binom{n}{i} - 1} \ket{D_n^i, k} \bra{k},
\end{equation}
where \( \ket{k} = \mathcal{P}_k \left[ \ket{1}^{\otimes i} \otimes \ket{0}^{\otimes (n - i)} \right] \) is the \( k \)-th classical basis state with \( i \) excitations. This construction defines a unitary \( U \) that transforms each excitation subspace to a corresponding orthonormal set of generalized Dicke states.

This choice of \( U \) allows density classification to be performed by measuring any single qubit, as all marginal qubit distributions are consistent across the states in each block. Moreover, any post-processing unitary \( W \) that acts trivially on the measurement qubit (say qubit \( s \))—i.e., \( W = W' \otimes I_s \)—will preserve the reduced density matrix \( \rho_s \), and hence also suffice for classification. This flexibility highlights the robustness of this approach under local post-processing.

%%%%%%%%%%%%%%%%%%%%%%%%%%%%%%%%%%%%%%%%%%%%%%%%%%%%%%%%%%%%%%%%%%%%%%%%%%%%%%%%%%%%%%%%%%%%%%%%

\section{Results I}
\label{sec:res1}

We begin by presenting the list of solutions in \Cref{tb:res1} and \Cref{tb:res2}, corresponding to system sizes ranging from 4 to 14 cells (qubits). In all cases, the optimisation process was carried out using the parameters \(p_m = 0.36\), \( \sigma = 0.45\), and \( \delta = 0 \), with a population size varying between 100 and 180, and a maximum of 100 generations. The measurement point was fixed at $ p = 1$, with cell indices labelled from 0. Additionally, for each instance, the angles defining the transition functions were optimised for a time \( t \) equal to half the number of cells in the system, where each step we have a sequence of two operators.

\begin{table}[h!]
\centering
 \begin{tabular}{||c c c ||} 
 \hline
 4 Cells & 6 Cells & 8 Cells \\ [0.5ex] 
 \hline\hline
 $\theta_1= 0.6892$, $\xi_1= 6.0897
$ & $\theta_1 = 0.9001$, $\xi_1 = 0.0752$ & $\theta_1= 0.8870$, $\xi_1 = 5.0322$  \\ 
 $\gamma_1=0.1366$, $\alpha_1=0.4177$~ & $\gamma_1 = 0.3960$, $\alpha_1 = 0.4801$~ & $\gamma_1 = 1.6581$, $\alpha_1 = 0.1687$~  \\
   &  &   \\
 $\theta_2=0.8995$, $\xi_2=5.8840$ & $\theta_2 = 0.6786$, $\xi_2 = 0.9749$ & $\theta_2 = 0.5313$, $\xi_2 = 6.2298$  \\
 $\gamma_2=0.8684$, $\alpha_2=2.8956$~ &  $\gamma_2 = 0.5549$, $\alpha_2 = 0.7985    $~ & $\gamma_2 = 1.1563$, $\alpha_2 = 1.6556$ \\ [1ex] 
 \hline
 \end{tabular}
 \caption{List of angles that solve the classification problem from 4 to 8 Cells. \label{tb:res1}}
\end{table}

In \Cref{Fig:sol8}, we illustrate the solution from \Cref{tb:res1} for the case with 8 cells, applied to two different input states containing two excitations. These two states are related by a translation, as described in \Cref{lemma:transl}. As a result, the probability distributions for measuring an excitation in each cell are expected to be shifted accordingly. In this particular example, since \( \ket{b'} = T^2 \ket{b} \), the probability distribution is shifted by two cells.

\begin{figure}[H]
	\includegraphics[trim=-400 0 130 0,clip,scale=0.16]{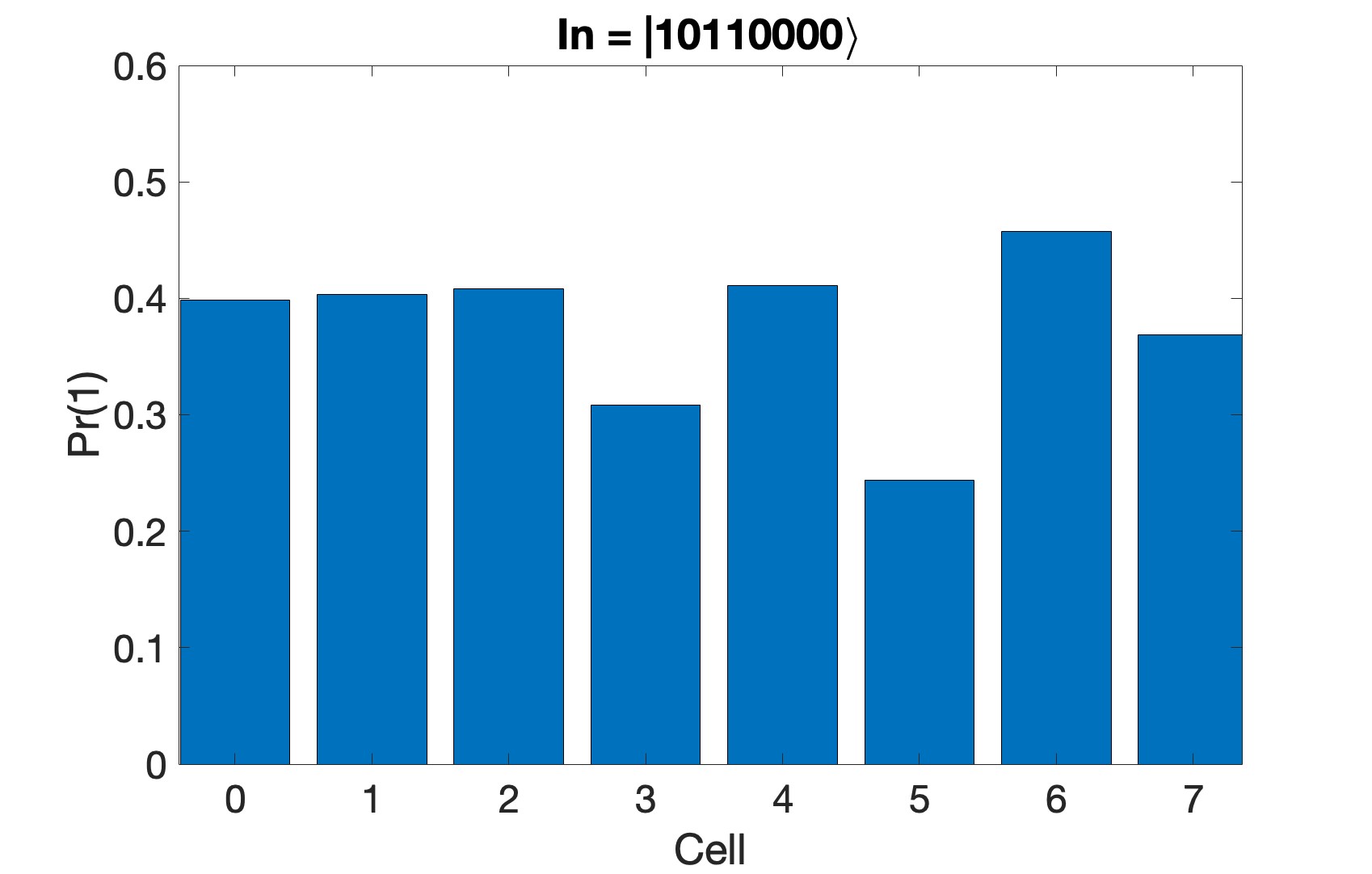}(a)\\
	\includegraphics[trim=-400 0 0 -80,clip,scale=0.16]{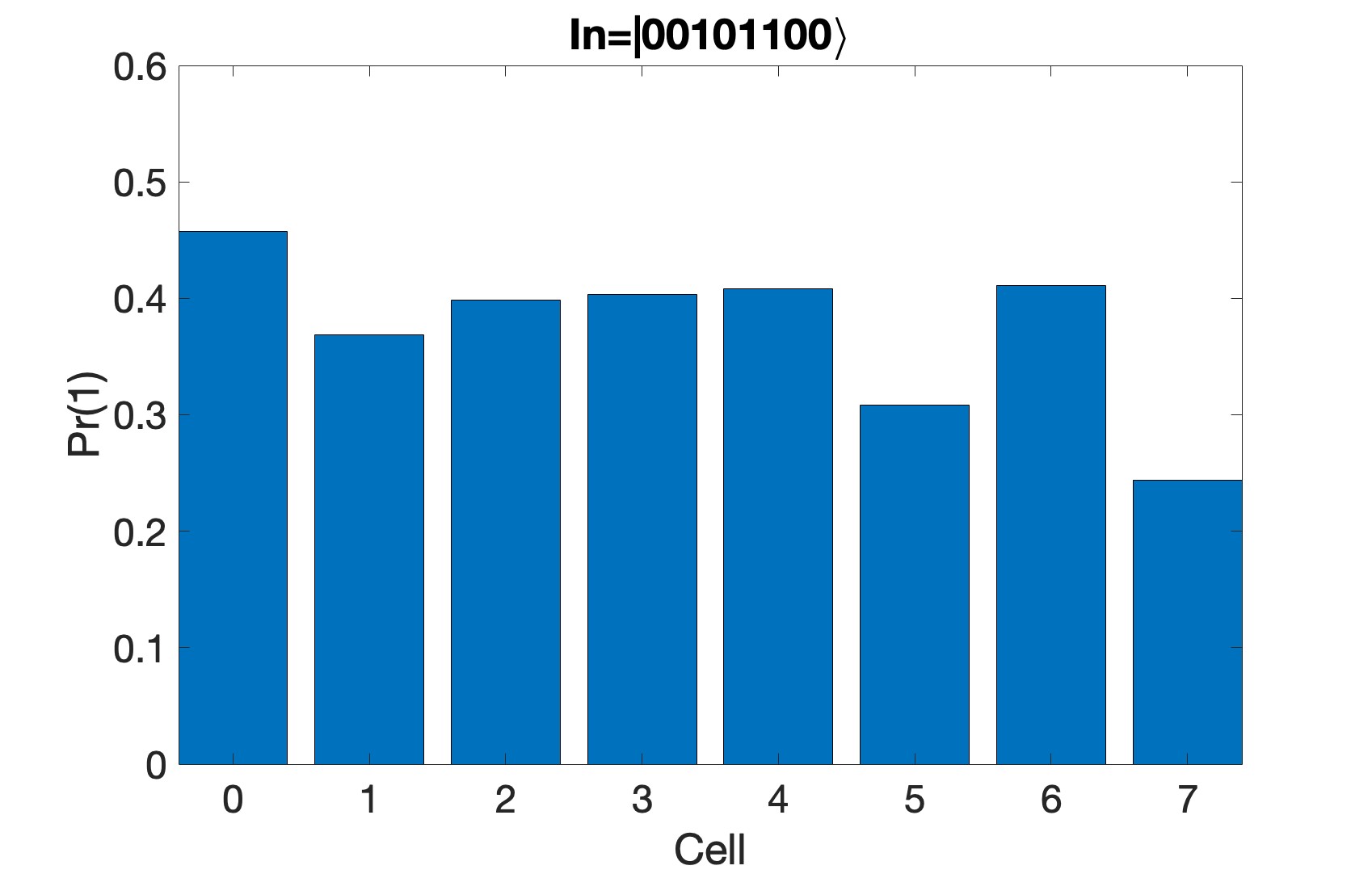}(b)\\
	\caption{\label{Fig:sol8}  These plots show the result of the quantum classifier solution for 8 cells, see \Cref{tb:res1}, with 3 excitations after 4 time steps. The $y$-axis represents the probability of measuring an excitation at each cell, indexed on the $x$-axis. In (a), the initial state is \( \ket{b} = \ket{10110000} \), while in (b), the state is \( \ket{b'} = T^2 \ket{b} = \ket{00101100} \), which is related to \( \ket{b} \) by a two-step cyclic translation.}
\end{figure}

\begin{table}[h!]
\centering
 \begin{tabular}{||c c c ||} 
 \hline
 10 Cells & 12 Cells & 14 Cells \\ [0.5ex] 
 \hline\hline
 $\theta_1= 1.0642$, $\xi_1= 0.3063
$ & $\theta_1 = 0.7497$, $\xi_1 = 0.6441$ & $\theta_1= 0.9282$, $\xi_1 = 5.6197$  \\ 
 $\gamma_1=0.2655$, $\alpha_1=0.8413$~ & $\gamma_1 =5.0980$, $\alpha_1 = 1.7034$~ & $\gamma_1 = 0.4424$, $\alpha_1 = 1.0604$~  \\
   &  &   \\
 $\theta_2=0.8338$, $\xi_2=0.8077$ & $\theta_2 = 0.9875$, $\xi_2 = 0.7946$ & $\theta_2 = 0.6703$, $\xi_2 = 0.6944$  \\
 $\gamma_2=6.1987$, $\alpha_2=2.6387$~ &  $\gamma_2 = 0.3631$, $\alpha_2 = 1.4459    $~ & $\gamma_2 = 3.1894$, $\alpha_2 = 1.8323$ \\ [1ex] 
 \hline
 \end{tabular}
 \caption{List of angles that solve the classification problem from 10 to 14 Cells. \label{tb:res2}}
\end{table}

A more compelling result is presented in \Cref{tab:fitmult}, where we report a single rule that successfully solves the DCT for multiple system sizes. This solution was obtained by designing a cost function that simultaneously optimized performance over systems of 4, 6, and 8,  cells, with time steps \( t = 2, 3, 4 \), respectively. As shown in \Cref{tab:fitmult}, the resulting rule achieves perfect classification for all initial states in the 4-, 6-, and 8-cell systems, and attains a fitness of 0.9909 for the 10-cell system. That is, out of 772 initial configurations with non-half density, only seven instances fail to be correctly classified.

\begin{figure}[H]
	\includegraphics[trim=-400 0 0 -80,clip,scale=0.22]{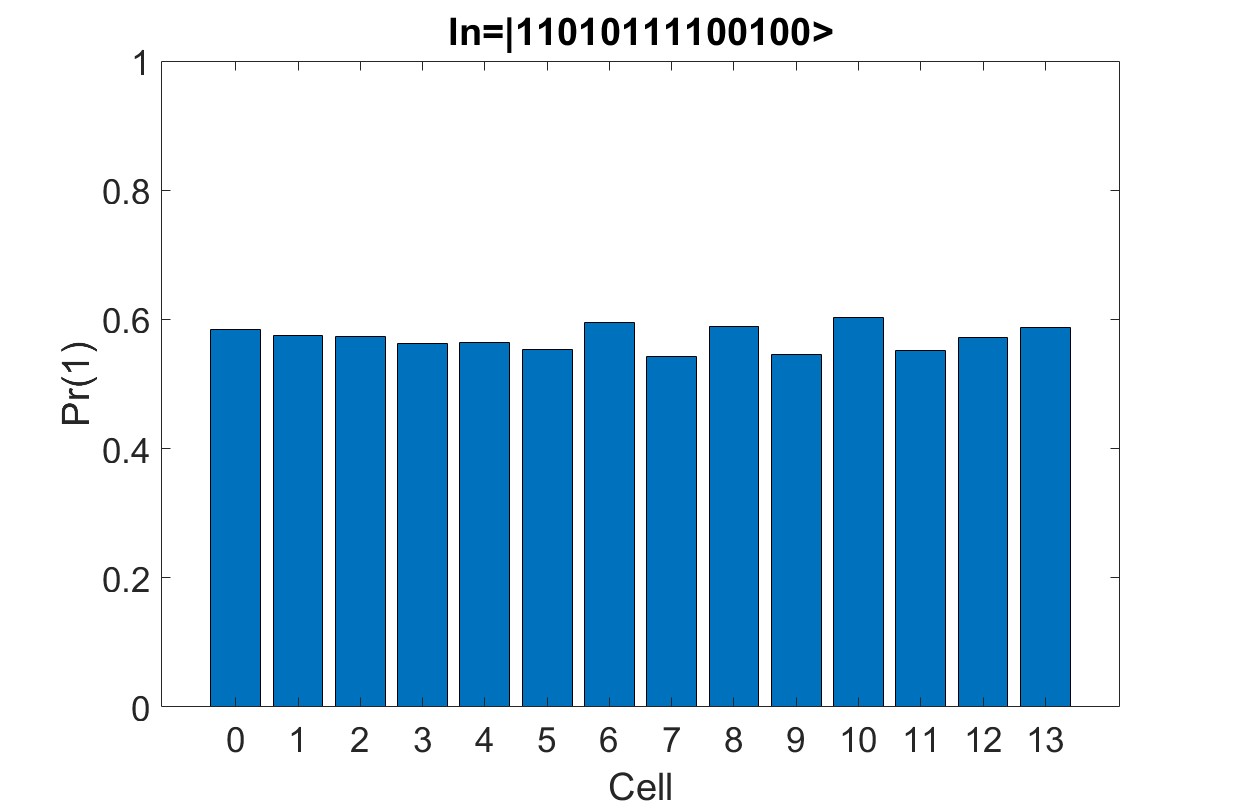}(a)\\
	\includegraphics[trim=-400 0 0 -80,clip,scale=0.22]{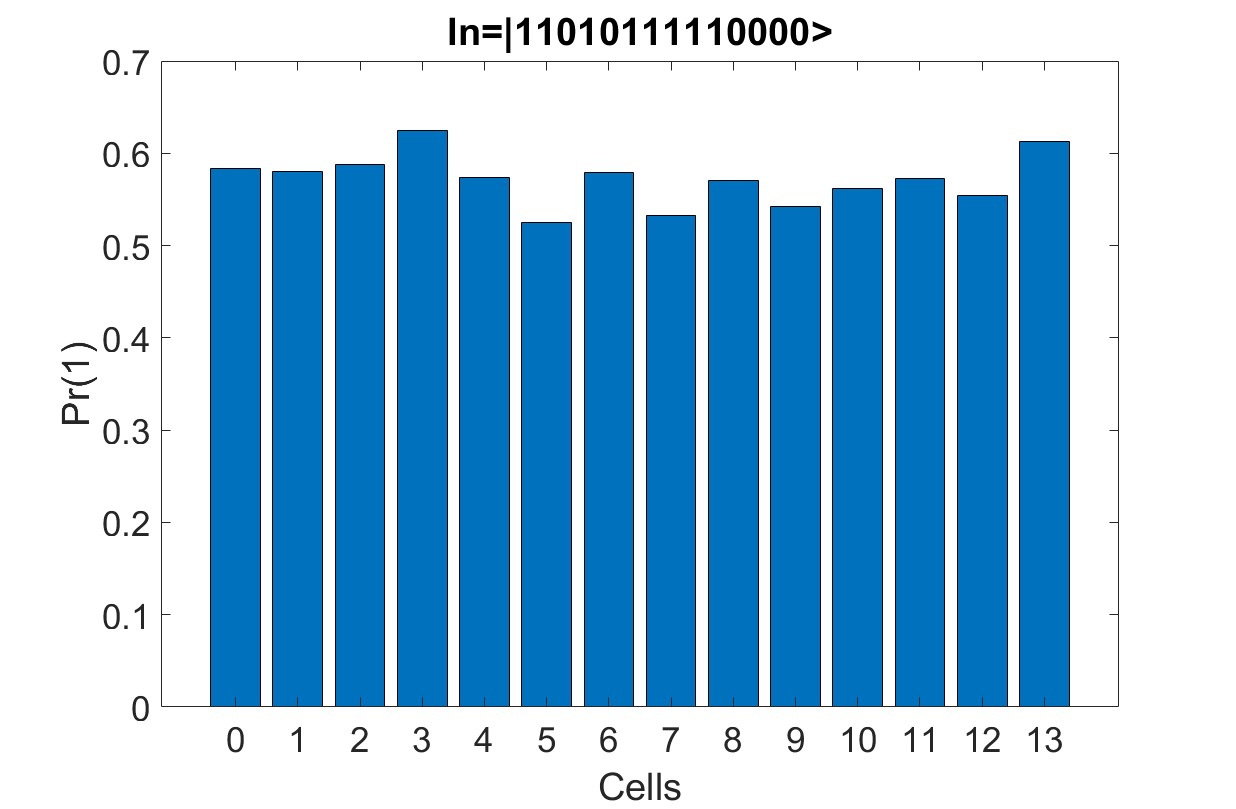}(b)\\
	\caption{
		{\label{Fig:sol12} Here are the plots for the quantum classifier with the rule given in Table \ref{tab:fitmult} with 14 cells and 8 particles after 7 time steps. Likewise, in the case of 8 qubits, the values in the bars are the probabilities to measure one excitation in a given cell, $x$-axis. In (a) our initial state is given by $\ket{b}=\ket{11010111100100}$ and in (b) we have the state $\ket{b'}=\ket{11010111110000}$. }
	}
\end{figure}

Interestingly, this rule—although optimized only for those specific sizes—generalizes well to larger system sizes not included in the optimization. For example, in the case of \( n = 12 \) and \( t = 6 \), it correctly classifies 3128 out of 3172 valid initial states. For \( n = 14 \) and \( t = 7 \), the rule solves 12,850 out of 12,952 cases. Due to the increasing computational cost, we did not extend the testing to even larger systems.

These results suggest the promising possibility that a single rule might exist that generalises to all system sizes, or that optimising over a sufficiently broad set of sizes simultaneously could yield rules with strong generalisation to unseen dimensions.

\begin{table}[H]
\centering
\begin{tabular}{c|c}
\toprule
\textbf{System Size \( n \)} & \textbf{Fitness Value \( F \)} \\
\midrule
4  & 1.0000 \\
6  & 1.0000 \\
8  & 1.0000 \\
10 & 0.9909 \\
12 & 0.9861 \\
14 & 0.9921 \\
\bottomrule
\end{tabular}
\caption{
Fitness values \( F \) achieved by a single PUQCA rule across multiple system sizes, where $n$ gives the number of cells. The rule is defined by the parameters: 
\( \theta_1 = 0.5367 \), \( \xi_1 = 6.1344 \), \( \gamma_1 = 0.2414 \), \( \alpha_1 = 1.0714 \), 
\( \theta_2 = 0.5988 \), \( \xi_2 = 6.1795 \), \( \gamma_2 = 5.0084\), \( \alpha_2 =  1.0913\). 
\label{tab:fitmult}}
\end{table}

We conclude this set of results by examining the special case $\gamma_{1}=\gamma_{2}=0$, which places the PUQCA in an efficiently classically simulable regime, as detailed in the next section. In all tested instances, we found a single rule that performs well across multiple system sizes. However, obtaining a \emph{single} rule that solves the task \emph{uniformly} across system sizes remains substantially more challenging in this setting.

Within this restricted regime, we present a representative rule. Under these constraints, perfect classification is achieved only for $n=4$, while the highest classification efficiency among larger sizes is observed at $n=10$. Although overall performance is markedly inferior to that of the full PUQCA, this rule remains a promising candidate for size-robust behaviour in a setting that admits efficient classical simulation. In the next section, we establish conditions under which the PUQCA is efficiently classically simulable and subsequently evaluate its performance on the DCT in that regime.

\begin{table}[H]
\centering
\begin{tabular}{c|c}
\toprule
\textbf{System size $n$} & \textbf{Fitness $F$} \\
\midrule
4  & 1.0000 \\
6  & 0.9091 \\
8  & 0.8925 \\
10 & 0.9974 \\
12 & 0.8153 \\
14 & 0.7627 \\
\bottomrule
\end{tabular}
\caption{
Single PUQCA rule (with $\alpha_{1}=\alpha_{2}=0$) applied across sizes; perfect performance only for $n=4$ and a large accuracy for $n=10$. Parameters:
$\theta_1 = 0.4749$, $\xi_1 = 1.4803$, $\gamma_1 = 5.5727$;
$\theta_2 = 0.4833$, $\xi_2 = 1.1740$, $\gamma_2 = 3.3965$.
\label{tab:fitmult_simReg_A}}
\end{table}

%%%%%%%%%%%%%%%%%%%%%%%%%%%%%%%%%%%%%%%%%%%%%%%%%%%%%%%%%%%%%%%%%%%%%%%%%%%%%%%%%%%%%%%%%%%%%%%%

\section{On the classical simulability of the PUQCA}
\label{sec:perf}

From the results in \Cref{sec:res1}, we observed the existence of a PUQCA $\mathcal{A}$ that functions as a density classifier across different system sizes. Notably, $\mathcal{A}$ possesses a key structural property in its two-qubit unitary: by setting $\alpha = 0$ in \Cref{eq:U2}, the submatrix acting on the subspace $\{\ket{01}, \ket{10}\}$ transitions from the unitary group $U(2)$ to the special unitary group $SU(2)$. This structural shift enables a mapping of the problem into the fermionic representation, which in turn allows for efficient computation of the output probabilities of $\mathcal{A}$ using classical resources.

The main objective of this section is to demonstrate how the fermionic mapping enables classical simulation of the probabilities associated with $\mathcal{A}$. This framework establishes a foundation for extending the search for density classifiers to larger lattice sizes.

To proceed to the fermionic picture, we first have to look for local Hamiltonians, described in terms of spin operators, where its unitary transformation has the following general format
\begin{equation}
\label{eq:SU2}
\begin{bmatrix}e^{i\phi} &  & 0\\
& SU(2)\\
0 &  & e^{-i\phi}
\end{bmatrix},
\end{equation}
which is the type of unitary used for our QCA when $\alpha, \phi=0$. The Lemma below show the parameterised Hamiltonian that gives our targets QCA unitaries.

\begin{lemma}
\label{lemm:Ising_QCA}
The Hamiltonian acting on qubits $k$, $k+1$
\begin{equation}
\label{eq:QCAhamilt}
    h_k = \frac{J}{4}e^{i\theta} \sigma^+_k \sigma^-_{k+1} + \frac{J}{4}e^{-i\theta} \sigma^-_k \sigma^+_{k+1}+h_1(\sigma_z)_k+h_2(\sigma_z)_{k+1},
\end{equation}
for $J,\theta,h_1,h_2,\tau\in \mathbb{R}$
generates the unitary
\begin{equation}
W=e^{-ih_k \tau}=\begin{bmatrix}e^{i\phi} & 0 & 0 & 0\\
0 & \cos{\Omega}-i\cos{\beta} \sin{\Omega} &  -ie^{i\theta}\sin\beta\sin{\Omega}& 0\\
0 & -ie^{-i\theta}\sin\beta\sin{\Omega}& \cos{\Omega}+i\cos{\beta} \sin{\Omega} & 0\\
0 & 0 & 0 & e^{-i\phi}
\end{bmatrix},
\end{equation}
with $\Omega=\tau\sqrt{(h_1-h_2)^2+J^2} ,\ \phi=(h_1+h_2)t,\ \tan\beta=
\frac{J}{h_1-h_2}$.
\end{lemma}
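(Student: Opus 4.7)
The plan is to exploit the fact that $h_k$ conserves excitation number. Since the ladder pair $\sigma^{+}_k\sigma^{-}_{k+1}$ and $\sigma^{-}_k\sigma^{+}_{k+1}$ each commute with $(\sigma_z)_k+(\sigma_z)_{k+1}$, so does $h_k$, and therefore $h_k$ is block diagonal in the computational basis with respect to Hamming weight. The two-qubit Hilbert space decomposes into the one-dimensional invariant subspaces $\mathrm{span}\{\ket{00}\}$ and $\mathrm{span}\{\ket{11}\}$, and the two-dimensional invariant subspace $\mathrm{span}\{\ket{01},\ket{10}\}$. Computing $e^{-ih_k\tau}$ therefore reduces to exponentiating $h_k$ independently on each block, and assembling the results into a $4\times 4$ matrix.

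The weight-zero and weight-two blocks are immediate. On $\ket{00}$ and $\ket{11}$ the ladder terms annihilate, so these are eigenvectors of $h_k$ with eigenvalues $\pm(h_1+h_2)$ coming from the Zeeman-like terms $h_1(\sigma_z)_k + h_2(\sigma_z)_{k+1}$. Exponentiating produces the corner entries $e^{\pm i\phi}$ with $\phi=(h_1+h_2)\tau$, up to an overall sign depending on the convention for which computational state is the $+1$ eigenstate of $\sigma_z$.

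For the weight-one block, I would write $h_k$ restricted to $\mathrm{span}\{\ket{01},\ket{10}\}$ as a $2\times 2$ Hermitian matrix: the Zeeman terms put $\pm(h_1-h_2)$ on the diagonal, while the ladder terms produce off-diagonal entries proportional to $Je^{\pm i\theta}$. Decomposing this matrix as $\vec{B}\cdot\vec{\sigma}$ with $B_z=h_1-h_2$ and $(B_x,B_y)$ proportional to $(\cos\theta,\sin\theta)$, I can apply the standard $SU(2)$ identity
\begin{equation*}
  e^{-i\tau\,\vec{B}\cdot\vec{\sigma}} \;=\; \cos(\tau|\vec{B}|)\,I \;-\; i\sin(\tau|\vec{B}|)\,\hat{B}\cdot\vec{\sigma},
\end{equation*}
and then read off the inner $2\times 2$ block by setting $\Omega:=\tau|\vec{B}|$ together with the polar parametrization $\cos\beta:=B_z/|\vec{B}|$ and $\sin\beta:=\sqrt{B_x^2+B_y^2}/|\vec{B}|$. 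This immediately yields $\tan\beta=J/(h_1-h_2)$, the diagonal entries $\cos\Omega\mp i\cos\beta\sin\Omega$, and, since $(B_x\pm iB_y)/\sqrt{B_x^2+B_y^2}=e^{\pm i\theta}$, the off-diagonal entries $-ie^{\pm i\theta}\sin\beta\sin\Omega$ stated in the lemma.

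The main difficulty is not mathematical but notational: one must fix, once and for all, the convention for $\sigma^{\pm}$ (in particular whether they are defined with or without the factor of $1/2$), the ordering of the basis $\{\ket{01},\ket{10}\}$, and the sign of $\sigma_z\ket{0}$. These choices set the coefficient in front of the $e^{\pm i\theta}$ terms and determine the sign of $\phi$ and the scaling inside $\Omega$; they need to be pinned down consistently so that the norm $|\vec{B}|^2$ matches $(h_1-h_2)^2+J^2$. Once this bookkeeping is done, the lemma follows directly from the block structure and the $SU(2)$ exponentiation identity, with no remaining substantive step.
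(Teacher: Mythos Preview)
Your proof is correct and in fact cleaner than the paper's, but the route is genuinely different. The paper does not invoke the block structure or the $SU(2)$ identity $e^{-i\tau\,\vec{B}\cdot\vec{\sigma}}=\cos(\tau|\vec{B}|)\,I-i\sin(\tau|\vec{B}|)\,\hat{B}\cdot\vec{\sigma}$; instead it writes out the full $4\times4$ matrix of $h$, expands $e^{-i\tau h}$ as a Taylor series, and explicitly computes $h^2,h^3,h^4$ to recognise the cosine series on the diagonal and resum the odd powers entry by entry to obtain the $\operatorname{sinc}\Omega$ factors. Your approach exploits the excitation-number symmetry up front and replaces that brute-force resummation with a single known identity, which makes the origin of the parameters $\Omega$ and $\beta$ transparent (they are just the magnitude and polar angle of the effective field $\vec{B}$). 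The paper's approach, by contrast, is more self-contained in that it assumes nothing beyond matrix Taylor expansion, and it also makes the specific operator conventions (in particular $\sigma^\pm=\sigma_x\pm i\sigma_y$ without the $1/2$, which is what turns $J/4$ into $J$ in the off-diagonal entries) explicit rather than deferring them to the bookkeeping caveat you raise.
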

The proof can be contemplated in \Cref{lemmaProof}.

Once we know which local Hamiltonian gives the target QCA unitary operator, we can describe how to get its fermionic correspondent Hamiltonian. For $n$ qubits on a 1D lattice (periodic or not) there is a convenient mapping of the Hilbert space to that of $n$ spinless fermionic modes. Consider an ordering of qubits from $0$ to $n-1$, for which Pauli operators $\sigma^{\alpha}_{n+k}\equiv \sigma^{\alpha}_k$ can be identified in the case of periodic boundaries. For a chain of spin in one dimension, the Jordan Wigner transformation from the spin operator at site $j$, to a Majorana fermion at the same size is defined as
\begin{equation}
\label{eq:mapping}
S^{+}_j =a_j^{\dagger}e^{i\phi_j}\;\text{and}\; S^{-}_j =a_je^{-i\phi_j}
\end{equation}
where $S^{+}=1/2 \sigma^+=1/2\ket{0}\bra{1}$, $S^-=1/2\sigma^-=(\sigma^+)^{\dagger}$ and 
\begin{equation}
\label{eq:stringOp}
\phi_j = \pi \sum_{l=0}^{j-1}a_l^{\dagger}a_l,    
\end{equation}
is a non-local operator. Notice that $(\sigma_z)_j=1-2a^{\dagger}_ja_j$.
Nearest neighbour two qubit interactions like those in \Cref{eq:parametri} map onto nearest neighbour coupling of fermionic modes with the provision that the unitary action on the subspaces $\{\ket{00},\ket{11}\}$ and
$\{\ket{01},\ket{10}\}$ is of type $SU(2)$ (determinant one). For example, the Hamiltonian
acting on qubits $k,k+1$. We can then conclude this part with the lemma that shows the result of mapping our QCA Hamiltonian into the fermionic space.

\begin{lemma}
\label{lem:ferm_halm}
In the fermionic space, the Hamiltonian 
\begin{equation}
  h_k = \frac{J}{4}e^{i\theta} \sigma^+_k \sigma^-_{k+1} + \frac{J}{4}e^{-i\theta} \sigma^-_k \sigma^+_{k+1}+h_1(\sigma_z)_k+h_2(\sigma_z)_{k+1},    
\end{equation}
is given by 
\begin{equation}
\label{eq:FermQCA}
h_k=Je^{i\theta} a_k a^{\dagger}_{k+1}+Je^{-i\theta} a_{k+1} a^{\dagger}_{k}+h_1(1-2a^{\dagger}_ka_k)+h_2(1-2a^{\dagger}_{k+1}a_{k+1}).
\end{equation}
\end{lemma}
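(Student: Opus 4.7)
The plan is to apply the Jordan--Wigner map of \Cref{eq:mapping} term by term to the spin Hamiltonian \Cref{eq:QCAhamilt} and verify that each piece reduces to the corresponding fermionic expression in \Cref{eq:FermQCA}. The Zeeman-type contributions $h_1(\sigma_z)_k+h_2(\sigma_z)_{k+1}$ follow immediately from the identity $(\sigma_z)_j=1-2 a_j^{\dagger}a_j$ noted after \Cref{eq:stringOp}, so the real content lies in the hopping bilinears $\sigma_k^{+}\sigma_{k+1}^{-}$ and its Hermitian conjugate.

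First I would substitute $\sigma_j^{+}=2 S_j^{+}=2a_j^{\dagger}e^{i\phi_j}$ and $\sigma_j^{-}=2 S_j^{-}=2a_j e^{-i\phi_j}$ to obtain
\begin{equation}
\sigma_k^{+}\sigma_{k+1}^{-} = 4\,a_k^{\dagger}\,e^{i\phi_k}\, a_{k+1}\, e^{-i\phi_{k+1}}.
\end{equation}
Using the recursion $\phi_{k+1}=\phi_k+\pi a_k^{\dagger}a_k$ implicit in \Cref{eq:stringOp}, I would rewrite $e^{-i\phi_{k+1}}=e^{-i\pi n_k}e^{-i\phi_k}$ with $n_k=a_k^{\dagger}a_k$. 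Since the mode $a_{k+1}$ commutes with every operator supported on sites $\le k$, I can pull it to the left of $e^{i\phi_k}$ and collapse $e^{i\phi_k}e^{-i\phi_k}=1$, leaving the short-range object $4\,a_k^{\dagger}a_{k+1}e^{-i\pi n_k}$.

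Next, exploiting $[a_{k+1},n_k]=0$, I would move $e^{-i\pi n_k}=1-2n_k$ adjacent to $a_k^{\dagger}$ and use the nilpotency $a_k^{\dagger}n_k=(a_k^{\dagger})^{2}a_k=0$ to conclude that the Jordan--Wigner string disappears entirely, yielding $4\,a_k^{\dagger}a_{k+1}$. Multiplying by the prefactor $\frac{J}{4}e^{i\theta}$ and rewriting via the canonical anticommutation $a_k^{\dagger}a_{k+1}=-a_{k+1}a_k^{\dagger}$ reproduces the first hopping term in \Cref{eq:FermQCA}; an identical computation for the Hermitian-conjugate partner $\sigma_k^{-}\sigma_{k+1}^{+}$ gives the second. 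Assembling these with the already-trivial $\sigma_z$ pieces yields the full claim.

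The main obstacle will be purely bookkeeping rather than conceptual: I will need to track the factor $1/2$ relating $S^{\pm}$ to $\sigma^{\pm}$ carefully so that the $J/4$ and the $4$ cancel into a clean $J$, to fix the sign and ordering of the residual fermion bilinear to match the stated form $a_k a_{k+1}^{\dagger}$, and to remember that the string cancellation relies crucially on $k$ and $k+1$ being adjacent (had they been further apart, a nontrivial parity string between them would survive). Once these conventions are pinned down, every remaining step is a direct manipulation in the fermion algebra.
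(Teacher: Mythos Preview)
Your approach is correct and essentially identical to the paper's: both substitute the Jordan--Wigner map, collapse the string via $\phi_{k+1}-\phi_k=\pi n_k$ to a single factor $e^{-i\pi n_k}$, and eliminate it using $a_k^{\dagger}n_k=0$, with the $\sigma_z$ terms handled by $(\sigma_z)_j=1-2n_j$. Two small caveats: your phrase ``$a_{k+1}$ commutes with every operator supported on sites $\le k$'' is literally false for odd fermion monomials (though true for the even string operators you actually move it through, so the manipulation is sound), and your final anticommutation $a_k^{\dagger}a_{k+1}=-a_{k+1}a_k^{\dagger}$ does not yield $a_k a_{k+1}^{\dagger}$---the paper's own derivation also stops at $a_k^{\dagger}a_{k+1}$, so the ordering in \Cref{eq:FermQCA} appears to be a typo in the statement rather than something your argument is missing.
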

The proof of this Lemma is given in \Cref{lemmaProof}.
 
Now that we know what the parameterised Hamiltonian is in the fermionic space, which corresponds to our QCA, let us consider the Heisenberg evolution of operators in the fermionic picture. For quadratic Hamiltonians like above, which conserve particle number (not just fermionic parity), we have the following results, where the proof is given in \Cref{app:Ev_localQ}.

\begin{lemma}[Linear evolution under local quadratic Hamiltonians]\label{lemm:Ev_localQ} Let $H$ be a quadratic fermionic Hamiltonian on $n$ nodes of the form
\begin{equation}
    H = \sum_{j,k=1}^n h_{jk}a^{\dagger}_ja_k,
\end{equation}
where $h_{jk} \in \mathbb{C}$, $h=h^{\dagger}$, and $H$ is local (i.e., $h_{jk}=0$ if $|j-k|\gg 1$). Then the time-evolved annihilation operator in the Heisenberg picture satisfies
\begin{equation}
\label{eq:heisenberg}
a_j(\tau)=e^{iH\tau}a_j(0)e^{-iH\tau}=\sum_{k=1}^n A_{j,k}a_k(0),
\end{equation}
where $A=e^{-ih\tau}$.
\end{lemma}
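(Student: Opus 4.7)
The plan is to derive a linear matrix ODE for the Heisenberg-picture annihilation operators and integrate it. Differentiating the definition $a_j(\tau) = e^{iH\tau} a_j(0) e^{-iH\tau}$ with respect to $\tau$ gives the Heisenberg equation
\begin{equation}
\frac{d}{d\tau} a_j(\tau) = i \bigl[ H, a_j(\tau) \bigr],
\end{equation}
so the whole task reduces to computing the commutator of $H$ with a single $a_j$ at $\tau = 0$; the same relation then holds at all $\tau$ by conjugating through $e^{\pm i H \tau}$.

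The key algebraic step is the identity $[AB, C] = A\{B, C\} - \{A, C\}B$, which I would apply to each term $a_m^{\dagger} a_n$ in $H$. Using the canonical anticommutation relations $\{a_p, a_q\} = 0$ and $\{a_p, a_q^{\dagger}\} = \delta_{pq}$, this yields
\begin{equation}
[a_m^{\dagger} a_n, a_j] = a_m^{\dagger} \{a_n, a_j\} - \{a_m^{\dagger}, a_j\} a_n = -\delta_{mj}\, a_n.
\end{equation}
Summing with coefficients $h_{mn}$ then gives $[H, a_j] = -\sum_k h_{jk}\, a_k$, so the Heisenberg equation becomes
\begin{equation}
\frac{d}{d\tau} a_j(\tau) = -i \sum_k h_{jk}\, a_k(\tau).
\end{equation}
This is a closed, linear, first-order ODE on the $n$-component vector $\vec{a}(\tau) = (a_1(\tau), \ldots, a_n(\tau))^{T}$, with constant matrix $-ih$ and initial condition $\vec{a}(0)$; its unique solution is $\vec{a}(\tau) = e^{-ih\tau} \vec{a}(0)$, i.e.\ $a_j(\tau) = \sum_k A_{jk}\, a_k(0)$ with $A = e^{-ih\tau}$, as claimed.

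There is no substantial obstacle: the content of the lemma is the standard fact that a number-conserving quadratic fermionic Hamiltonian generates a linear transformation on the mode operators, and the proof is a one-line commutator computation. The only place that requires care is the use of the commutator-from-anticommutator identity together with $\{a_n, a_j\} = 0$, which is what collapses the product $a_m^{\dagger} a_n a_j$ down to a single $a_n$ and makes the right-hand side close on the $a_k$'s (rather than generating higher-order strings). I would also remark that the locality hypothesis $h_{jk} = 0$ for $|j - k| \gg 1$ is not used in the derivation itself; it is stated only because it is what makes the $n \times n$ matrix exponential $e^{-ih\tau}$ (and hence the full Heisenberg dynamics) efficiently computable classically, which is the role of this lemma in \Cref{sec:perf}.
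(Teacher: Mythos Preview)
Your proof is correct and follows essentially the same route as the paper: set up the Heisenberg equation, compute $[a_m^{\dagger}a_n,a_j]=-\delta_{mj}a_n$ from the CAR, obtain the linear ODE $\dot a_j=-i\sum_k h_{jk}a_k$, and integrate to $A=e^{-ih\tau}$. The only cosmetic difference is that you invoke the identity $[AB,C]=A\{B,C\}-\{A,C\}B$ whereas the paper expands the commutator directly; your observation that the locality hypothesis plays no role in the derivation itself is also accurate.
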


Observe that in the solution above, \( A \in SU(n) \). Returning to the QCA evolution, recall that the transition functions act only on pairs of neighbouring cells, as described in \Cref{eq:TransitionFunc}, and in the fermionic representation are given by \Cref{eq:FermQCA}. The full evolution is generated by alternating applications of two local Hamiltonians:
\begin{equation}
\label{eq:Hm1}
H_e = \sum_{k=0}^{n/2-1} h^{(1)}_{2k}, \quad H_o = \sum_{k=0}^{n/2-1} h^{(2)}_{2k+1},
\end{equation}
where \( H_e \) corresponds to the Hamiltonian that generates the even-layer unitary \( W_0 \), and \( H_o \) to its odd-layer counterpart \( W_1 \), when \( \alpha = 0 \). Thus, the Heisenberg evolution of the annihilation operators takes the linear form given in \Cref{eq:heisenberg}, and as a result, we arrive at the following solution.

\begin{lemma}\label{lem:heisen_sol}
The solutions of the Heisenberg equations of motion \Cref{eq:heisenberg} for the fermionic annihilation operators, under the Hamiltonians \Cref{eq:Hm1}, are given by
\begin{equation}
A_e =
e^{i\phi_1} \, I_{n/2} \otimes 
\begin{pmatrix}
a_1 & b_1 \\
-b_1^{*} & a_1^{*}
\end{pmatrix},
\end{equation}
where $I_n$ represents a $n\times n$ identity matrix and
\begin{equation}
A_o = X_n A X_n^{\dagger},
\end{equation}
where
\begin{equation}
A =
e^{i\phi_2} \, I_{n/2} \otimes 
\begin{pmatrix}
a_2 &  b_2 \\
-b_2^{*} & a_2^{*}
\end{pmatrix},
\end{equation}
with
\begin{align}
a_j &= \cos \Omega_j + i \cos \beta_j \, \sin \Omega_j, \nonumber \\
b_j &= ie^{i \theta_j} \sin \Omega_j \, \sin \beta_j,
\end{align}
for \( j = 1, 2 \). Here,
\begin{equation}
X_m = \sum_{j=0}^{m-1} \ket{j} \bra{j \oplus_m 1},
\end{equation}
is the unitary cyclic shift matrix.
\end{lemma}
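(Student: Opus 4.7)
The plan is to invoke \Cref{lemm:Ev_localQ} separately for $H_e$ and $H_o$, reducing the problem to exponentiating the single-particle $n\times n$ hopping matrices $h_e$ and $h_o$ built from the quadratic fermionic forms in \Cref{eq:FermQCA} and summed in \Cref{eq:Hm1}. Rewriting each $h^{(j)}_k$ by means of the anticommutation $a_k a^{\dagger}_{k+1}=-a^{\dagger}_{k+1}a_k$ puts it in the canonical form $\sum h_{lm}a^{\dagger}_l a_m + c$, producing a $2\times 2$ Hermitian block on the sites $(k,k+1)$ plus an additive scalar that will become the per-pair phase $e^{i\phi_j}$ after exponentiation.

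First I treat $H_e$. The pairs $(0,1),(2,3),\ldots,(n-2,n-1)$ are disjoint and the space-homogeneity of the rule forces all of their blocks to be identical, so $h_e$ is literally block-diagonal in the site basis with $n/2$ identical $2\times 2$ blocks. Matrix exponentiation preserves that block structure, hence
\begin{equation}
A_e = e^{-ih_e\tau}=e^{i\phi_1}\, I_{n/2}\otimes A_1,
\end{equation}
where $A_1$ is the exponential of the traceless part of a single pair block. Writing that traceless part in the Pauli basis and using the standard identity $(\hat n\cdot\vec\sigma)^2=I$, as in the derivation of \Cref{lemm:Ising_QCA}, delivers $A_1$ in the claimed $SU(2)$ form with $a_1=\cos\Omega_1+i\cos\beta_1\sin\Omega_1$ and $b_1=ie^{i\theta_1}\sin\Omega_1\sin\beta_1$.

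Then I turn to $H_o$. The pairs $(1,2),(3,4),\ldots,(n-1,0)$ are again disjoint with identical blocks, but they contain the wrap-around pair, so $h_o$ is not block-diagonal on even pairs in the natural site basis. From $X_n\ket{k}=\ket{k-1\bmod n}$ a direct computation gives the site-shift identity $(X_n^\dagger M X_n)_{jk}=M_{j-1,\,k-1\bmod n}$ for every $n\times n$ matrix $M$; applying it to $M=h_o$ shows that $X_n^\dagger h_o X_n$ is block-diagonal on the even pairs with the same $h^{(2)}$ block repeated. Exponentiating inside the conjugation yields
\begin{equation}
X_n^\dagger A_o X_n = e^{-iX_n^\dagger h_o X_n\,\tau}=e^{i\phi_2}\, I_{n/2}\otimes A_2,
\end{equation}
which rearranges to $A_o = X_n A X_n^\dagger$ with $A$ as stated and $A_2$ computed from the $h^{(2)}$ parameters exactly as $A_1$ was from $h^{(1)}$.

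The main obstacle I expect is bookkeeping between the qubit-picture unitary of \Cref{lemm:Ising_QCA}, whose non-trivial $2\times 2$ block is written in the basis $\{\ket{01},\ket{10}\}$, and the Heisenberg-picture matrix $A=e^{-ih\tau}$ of \Cref{lemm:Ev_localQ}, which is indexed in the opposite site ordering $\{\ket{k},\ket{k+1}\}$. Once the resulting sign convention for $\theta_j$ and the per-pair phase $\phi_j$ are reconciled with those of \Cref{lemm:Ising_QCA}, the remainder is a mechanical Pauli-matrix exponentiation together with the shift-matrix identity above.
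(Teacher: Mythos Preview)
Your proposal is correct and, in fact, more streamlined and more complete than the paper's own argument. The paper proves \Cref{lem:heisen_sol} by recomputing the commutators $[h_k,a_j]$ and $[h_k,a_{j+1}]$ term by term, assembling the resulting coupled two-mode ODE $\dot{\mathbf a}^{(2)}_j=iM\mathbf a^{(2)}_j$, splitting $tM=\phi I+tN$, and exponentiating $N$ by a power-series computation that essentially redoes the calculation behind \Cref{lemm:Ising_QCA}; only at the very end is the single-pair result promoted to $I_{n/2}\otimes(\cdot)$, and the odd-layer statement $A_o=X_nAX_n^\dagger$ is not derived at all in the proof text. Your route instead invokes \Cref{lemm:Ev_localQ} once to replace all of the commutator bookkeeping by the single-particle identity $A=e^{-ih\tau}$, then reads the $2\times 2$ block structure of $h_e$ directly and obtains $A_o$ by the shift-conjugation identity $(X_n^\dagger M X_n)_{jk}=M_{j-1,k-1}$. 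This is a more structural argument: it avoids duplicating work already done in \Cref{lemm:Ev_localQ} and \Cref{lemm:Ising_QCA}, and it actually establishes the $A_o$ formula rather than leaving it implicit. Your caveat about reconciling the $\{\ket{01},\ket{10}\}$ basis of \Cref{lemm:Ising_QCA} with the site-ordered basis $\{\ket k,\ket{k+1}\}$ of \Cref{lemm:Ev_localQ} is exactly the right thing to flag; once you track that (it amounts to $\theta\mapsto -\theta$ and the appearance of $\phi_j=(h_1+h_2)\tau$ from the trace part of the $2\times2$ block), the two computations agree on the nose.
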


In the lemma above, where the proof is given in \Cref{app:Ev_localQ}, the parameters $\{\Omega,\beta,\theta\}\in[0,2\pi)$ are freely adjustable. Since $\phi$ is a global phase, we henceforth ignore it. We can now bring the following results from the lemma above, where the proof can also be contemplated in \Cref{app:Ev_localQ}.

\begin{corollary}
\label{cor:QCA_Ev}
Consider a local fermionic QCA with a single time-step evolution operator given by
\begin{equation}
    \mathbb{A} = A_o A_e,
\end{equation}
where \( A_o \) and \( A_e \) are defined in \Cref{lem:heisen_sol}. Under periodic boundary conditions and after \( t \) time steps, the evolution operator takes the form
\begin{equation}
    \mathbb{A}^t=[F_{n/2}\otimes I_2] \Bigg[\sum_{k=0}^{n/2-1}\ket{k}\bra{k}\otimes [M(k)]^t\Bigg] [F^{\dagger}_{n/2}\otimes I_2].
\end{equation}
where $F_m$ is discrete Fourier transform and
\begin{equation}
\label{eq:mk}
M(k)=\left(\begin{array}{cc}a_2^{*}a_1 +b_2^*b_1^*e^{-i 4\pi k/n} &a^{*}_2b_1  - b_2^{*}a_1^* e^{-i4\pi k/n} \\a_1b_2 e^{i 4\pi k/n} -a_2b^{*}_1& b_2b_1 e^{i 4\pi k/n} + a_2a_1^{*}\end{array}\right).
\end{equation}
\end{corollary}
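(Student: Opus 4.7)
The plan is to exploit the two-site translation symmetry of $\mathbb{A}$ and block-diagonalize it using the Fourier transform over pairs of sites. Write $A_e = I_{n/2}\otimes M_1$ and $A_o = X_n(I_{n/2}\otimes M_2)X_n^\dagger$, where $M_1$ and $M_2$ are the $2\times 2$ blocks from \Cref{lem:heisen_sol}. First I would verify $[\mathbb{A},X_n^2]=0$. This is immediate for $A_e$, which acts identically on every even-indexed pair and is therefore invariant under shifts by two sites; the same invariance of the middle factor in $A_o$ gives $[A_o,X_n^2]=0$. Hence $\mathbb{A}$ is block-diagonalized in the eigenbasis of $X_n^2$.

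Second, I would make that eigenbasis explicit. Grouping the $n$ sites into pairs $(2p,2p+1)$, define the momentum states
\begin{equation*}
\ket{k,s}=\frac{1}{\sqrt{n/2}}\sum_{p=0}^{n/2-1} e^{2\pi i k p/(n/2)}\ket{2p+s},\quad k\in\{0,\dots,n/2-1\},\ s\in\{0,1\}.
\end{equation*}
One checks $X_n^2\ket{k,s}=e^{\pm 4\pi i k/n}\ket{k,s}$, so the required change of basis is implemented by $F_{n/2}\otimes I_2$. In this basis, $A_e$ becomes $\sum_k\ket{k}\bra{k}\otimes M_1$, and the single-site shift $X_n$ reduces to a $2\times 2$ matrix $X(k)$ in each momentum sector that swaps $\ket{k,0}\leftrightarrow\ket{k,1}$ up to a phase $e^{\pm 4\pi i k/n}$. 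Because $I_{n/2}\otimes M_2$ is already trivially block-diagonal, $A_o$ becomes $\sum_k\ket{k}\bra{k}\otimes\bigl(X(k)M_2 X(k)^\dagger\bigr)$.

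Third, in each momentum sector the one-step evolution reduces to the $2\times 2$ product $M(k)=X(k)M_2X(k)^\dagger M_1$, a plain matrix multiplication that I expect to reproduce exactly \Cref{eq:mk} upon substituting the parameterized forms of $M_1$ and $M_2$. The $t$-fold composition is then immediate, since conjugation by $F_{n/2}\otimes I_2$ commutes with taking powers, giving the claimed expression with $[M(k)]^t$ in each sector.

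The main obstacle is not conceptual but rather careful bookkeeping of sign conventions. The choices of Fourier direction and of whether $X_n$ sends $\ket{j}$ to $\ket{j-1}$ or $\ket{j+1}$ each flip the sign of the momentum phase in $X(k)$ and hence in $M(k)$. I would fix these once at the outset by computing a single matrix element of $X_n$ directly in the momentum basis, so that the final $2\times 2$ multiplication reproduces the phases $e^{\pm 4\pi i k/n}$ of \Cref{eq:mk} with exactly the signs stated.
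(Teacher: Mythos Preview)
Your proposal is correct and follows essentially the same approach as the paper: both exploit the two-site translation symmetry and block-diagonalize $\mathbb{A}$ via the Fourier transform on the pair index, reducing each momentum sector to a $2\times 2$ matrix. The only difference is organizational---the paper first uses the explicit identity $X_n = X_{n/2}\otimes\bigl(\begin{smallmatrix}0&0\\1&0\end{smallmatrix}\bigr)+I_{n/2}\otimes\bigl(\begin{smallmatrix}0&1\\0&0\end{smallmatrix}\bigr)$ to expand $A_o$ as a sum in $I_{n/2}$, $X_{n/2}$, $X_{n/2}^\dagger$ and then Fourier-diagonalizes $X_{n/2}$, whereas you pass to the momentum basis at the outset and compute $X(k)M_2X(k)^\dagger M_1$ there; these are two bookkeepings of the same computation.
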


Now, let the initial state of the system in the qubit representation consist of a string of $n-m$ zeros and $m$ ones, where the ones occur at locations in the set $S=\{s_j\}_{j=1}^m \subset \{1,\cdots,n\}$. Then, in the fermionic picture, the initial state is
\begin{equation}
\ket{\Psi(0)}=a^{\dagger}_{s_m}\ldots a^{\dagger}_{s_1}\ket{vac},    
\end{equation}
where $\ket{vac}$ is the vacuum state with no particles. Since we have the for $t$ time steps of the PUQCA in the Heisenberg picture given by 
\begin{equation}
 a_j(t)=\sum_{c}(\mathbb{A}^{t})_{jc}\,a_c(0).   \end{equation}
We can then compute $n_j(t)$,  the occupation number at the mode $j$, i.e.,
\begin{align}
\langle a_j^\dagger(t)a_j(t)\rangle
&=\langle\Psi(0)|\,a_j^\dagger(t)a_j(t)\,|\Psi(0)\rangle\nonumber\\
&=\sum_{\ell,c}(\mathbb{A}^{t})_{j\ell}^*(\mathbb{A}^{t})_{jc}\,
\langle{\rm vac}|a_{s_1}\cdots a_{s_m}\,a_\ell^\dagger a_c\,
a_{s_m}^\dagger\cdots a_{s_1}^\dagger|{\rm vac}\rangle.
\end{align}
We now have to look more carefully at the expression above. First, let us consider the case where $c\notin S$. In that scenario, we do not have a particle in mode $c$, so $a_c\ket{\Psi(0)}=0$. Now in the case where $c \in S$, we can use the anticommutation relation, $\{a_l^{\dagger},a_m\}=\delta_{l,m}I$, so that
\begin{align}
a_ka^{\dagger}_{s_m}\ldots a^{\dagger}_{s_{j+1}}a^{\dagger}_{s_j=c}a^{\dagger}_{s_{j+1}}\dots a^{\dagger}_{s_1}\ket{vac} &= (\pm)a^{\dagger}_{s_m}\ldots a^{\dagger}_{s_{j+1}}a_ca^{\dagger}_{s_j=c}a^{\dagger}_{s_{j+1}}\dots a^{\dagger}_{s_1}\ket{vac}\nonumber\\ 
&= (\pm)a^{\dagger}_{s_m}\ldots a^{\dagger}_{s_{j-1}}a^{\dagger}_{s_{j+1}}\dots a^{\dagger}_{s_1}\ket{vac}.    
\end{align}
The same idea can be done for the elements of the dual space, so we will only get nonzero results if $l=c$, which then gives
\begin{equation}
\langle a_j^\dagger(t)a_j(t)\rangle
=\sum_{c\in S}\big|(\mathbb{A}^{t})_{jc}\big|^2.
\end{equation}
Notice that the occupation number can be written using the projector onto the occupation space of the initial state $P_S=\sum_{c\in S}\ket{c}\bra{c}$ and follows
\begin{equation}
\langle a_j^\dagger(t)a_j(t)\rangle
 = \big(\mathbb{A}^{t}P_S \mathbb{A}^{t\dagger}\big)_{jj}.
\end{equation}

In the context of the PUQCA the projector onto the occupation number can be expressed in the pair $(c_s,c_p)$, i.e.
\begin{equation}
\label{eq:Proj_cc}
P_S =\sum_{(c_s,c_p) \in S}\ket{c_s}\bra{c_s}\otimes \ket{c_p}\bra{c_p},   
\end{equation}
where $c_s \in \mathbb{Z}_{n/2}$ and $c_p \in \mathbb{Z}_{2}$. 
In the context of the PUQCA, this corresponds to a system with $n/2$ cells, each containing two subcells. Thus, a label such as $(4,0)$ denotes an excitation in cell $4$ at the \emph{left} (or ``even'') subcell, while $(4,1)$ denotes an excitation in cell $4$ at the \emph{right} (or ``odd'') subcell. 

Since we are interested in bringing this mapping to the contest of measuring the excitation at the qubit $p$, in the fermionic picture it is equivalent to computing the occupation number, i.e. $Pr(1_{(j_s,j_p)})=\langle n_{(j_s,j_p)}(t)\rangle$, where $j_s\in \{0,\cdots,n/2-1\}$ tell which cell we are looking the meansurement and $j_p\in \{0,1\}$ tells us what is the subcell.  In particular, for $(0,0)$, which we will just say as the probability at the qubit 0 to keep with the same convention used in before the mapping to the fermionic case, is  
\begin{equation}
\label{eq:prob_eff}
Pr(1_{(0,0)}) = \frac{4}{n^2}\sum_{k,k'=0}^{n/2-1}  \sum_{(c_s,c_p)\in S} e^{-4\pi i c_s(k-k') /n}\bra{0}[M(k)]^t\ket{c_p}\bra{c_p} [M(k')]^{\dagger t}\ket{0},   
\end{equation}
where the full derivation of the expression above is given in \Cref{app:Prob}.

\subsection{Results II}

We now report the results found using the fermionic PUQCA. Despite the efficient method for treating larger system sizes, the computation required to find the QCA rule is still quite demanding due to the exponential number of initial states as a function of the system size that need to be tested to obtain the PUQCA, which solves the density classification task.

In the results reported here, we considered the same parameters as in the previous evolutionary algorithm, i.e., $ p_m=0.36$ and $ \sigma=0.45$, with $\delta=0$, and the measurement point $p=0$. The population size also varied between 100 and 180, with a maximum of 100 generations and, the angle parametrisation used in Equation \ref{eq:mk} is
\begin{equation}
 a_j = \exp(i\xi_j)\cos(\theta_j), \quad b_j= \exp(i*\gamma_j)\sin(\theta_j),
\end{equation}
for $j=1,2$. 

In the tables below, we show the results for the PUQCA in the simulatable regime from 4 to 8 cells in Table \ref{tb:res1_pt2}, where the rules resolve perfectly when the number of steps is equal to half the number of cells and for Table \ref{tb:res2_pt2}, which has some considerations. For the case with 10 cells, rather than $t=5$, we only found a solution for $ t=6$. In the case of 12 cells, rather than $t=6$, we only found solutions for $t=13$. However, with 14 cells, we could not find a PUQCA rule that solves perfectly for all initial conditions; we only found a rule with a fitness value of 0.9722, which occurs at $t=15$. It is also important to stress that the parameters in Table \ref{tb:res2_pt2} were tested in Equation \ref{eq:prob_eff} and found to agree.

\begin{table}[h!]
\centering
 \begin{tabular}{||c c c ||} 
 \hline
 4 Cells & 6 Cells & 8 Cells \\ [0.5ex] 
 \hline\hline
 $\theta_1= 0.3692$, $\xi_1= 0.3158$ & $\theta_1 = 0.8452$, $\xi_1 = 0.1676$ & $\theta_1=0.5985$, $\xi_1 = 1.7330$  \\ 
 $\alpha_1=0.1242$~ & $\alpha_1 =  0.0163$~ &  $\alpha_1 = 0.2285$~  \\
   &  &   \\
 $\theta_2= 0.2891$, $\xi_2=0.2193$ & $\theta_2 = 0.5060$, $\xi_2 = 0.4904$ & $\theta_2 = 0.8025$, $\xi_2 = 0.3008$  \\
 $\alpha_2=0.5550$~ &   $\alpha_2 = 6.1718$~ &  $\alpha_2 = 0.6899$ \\ [1ex] 
 \hline
 \end{tabular}
 \caption{Angle configurations that solve the classification problem within the classically simulable PUQCA from $4$ to $8$ cells. \label{tb:res1_pt2}}
\end{table}

\begin{table}[h!]
\centering
 \begin{tabular}{||c c c ||} 
 \hline
 10 Cells & 12 Cells & 14 Cells \\ [0.5ex] 
 \hline\hline
 $\theta_1= 0.3963$, $\xi_1= 0.8772$ & $\theta_1 = 0.7089$, $\xi_1 =  0.9207$ & $\theta_1=0.4592$, $\xi_1 = 0.7973$  \\ 
 $\alpha_1=0.7499$~ & $\alpha_1 = 0.6102$~ &  $\alpha_1 = 0.7148$~  \\
   &  &   \\
 $\theta_2= 0.4026$, $\xi_2= 0.6149$ & $\theta_2 = 0.7453$, $\xi_2 = 0.4214$ & $\theta_2 = 0.4311$, $\xi_2 = 0.8624$  \\
 $\alpha_2= 1.1856$~ &   $\alpha_2 =0.4745$~ &  $\alpha_2 = 0.8936$ \\ [1ex] 
 \hline
 \end{tabular}
 \caption{Angle configurations that solve the classification problem within the classically simulable PUQCA from $10$ to $12$ cells, but only with a large accuracy for the case with 14 cells. \label{tb:res2_pt2}}
\end{table}

Now, in the multiple-solution case—consistent with Table \ref{tab:fitmult_simReg_A}—fitting across multiple system sizes within the six-parameter PUQCA yields low accuracy, comparable to that obtained with the full PUQCA.

\begin{table}[H]
\centering
\begin{tabular}{c|c}
\toprule
\textbf{System size $n$} & \textbf{Fitness $F$} \\
\midrule
4  & 1.0000 \\
6  & 0.9091 \\
8  & 0.8925 \\
10 & 0.9896 \\
12 & 0.7995 \\
14 & 0.7763 \\
\bottomrule
\end{tabular}
\caption{
Single PUQCA rule, in the classically simulable regime, applied across sizes; perfect performance only for $n=4$ and a reasonable accuracy for $n=10$. Parameters:
$\theta_1 = 0.4949$, $\xi_1 = 0.7195$, $\gamma_1 =  0.0337$;
$\theta_2 = 0.4790$, $\xi_2 = 1.0300$, $\gamma_2 = 0.4848$.
\label{tab:fitmult2_simReg_A}}
\end{table}

\section{Conclusion}\label{sec:conc}

Despite the non-existing stationary states for the PUQC, we brought a new success condition that can be used for the PUQCA to solve the DCT. We could find PUQCA rules using the evolutionary algorithm for each system size, along with the number of PUQCA updates, $\mathcal{E}(W_1,W_0)$, with exactly half the number of cells considered. We also found rules that could solve with a higher accuracy for multiple system sizes, with the number of PUQCA updates being half the number of cells. Moreover, we show that the PUQCA admits a simple mapping to a classically simulable regime and that, under an alternative success criterion, it provides a viable route to solving the density classification task. Within the classically simulable PUQCA, we can find rules that perfectly solve the DCT from four to twelve cells, without necessarily following the number of updates as in the full regime PUQCA. We also could find a rule that solves well when $n=14$ and a rule that solves for multiple system sizes seems also to be possible.

Whether a single rule can solve the task uniformly across multiple system sizes remains an open question; we have not found an analytic construction. On the other hand, there exist generalized Dicke states that yield the correct outcome for our DCT success criterion, so such solutions exist in principle. Although we have not tested this here, it is natural to ask whether PUQCA rules that succeed across sizes effectively prepare states close to Dicke states. Thus, alternatively, we could consider a fitness function that finds angles, for a fixed number of time steps, which maps the initial classical configuration to a state that closely approximates the generalized Dick states with the same number of excitations.

The computational challenge arises not only in the non–classically simulable regime of the PUQCA, but also from the exponential growth in the number of initial configurations that must be tested to certify a rule for the DCT. Following the spirit of~\cite{wolz2008very}, we could instead evaluate fitness on a suitably chosen \emph{sample} of initial configurations, trading exact certification for scalability. This sampling strategy could substantially extend the search space and reveal additional PUQCA rules that solve (or closely approximate) the DCT beyond what we explored here. 

Since one array PUQCA, at one time step, compiles to a shallow 1D nearest–neighbour circuit (two-site unitaries per layer, Figure \ref{fig:PUQCA_rep}), this also suggests a convenient benchmarking task for current quantum hardware experiments and for the DCT setting.

The density classification task with PUQCA—a global majority decision from strictly local updates—has natural touchpoints with quantum error correction (QEC) \cite{QCADenClass}. In repetition and other majority-vote decoders, the logical bit is inferred by a global majority over physical readouts, implemented via local parity/syndrome measurements followed by a (possibly local) aggregation rule; this is the DCT archetype in 1D form \cite{NielsenChuang}. More broadly, cellular-automaton style local decoders have been proposed for topological codes, where syndromes evolve under local update rules that aim to “concentrate” defects and drive a global correction without nonlocal matching—see Harrington’s local decoder for the surface code \cite{HarringtonThesis}, CA decoders for topological memories \cite{Herold2015PRL}, and early foundations in topological QEC \cite{DennisKitaevLandahlPreskill}. In these settings, a quantum formulation for DCT-like primitive (majority/success criterion under locality and homogeneity) is directly relevant to \textit{syndrome extraction} pipelines and \textit{fault-tolerant, low-latency decoding}, where communication and classical controller overhead are constrained.

\section*{Acknowledgements}
P.C.S.C. is grateful to Fernando de Melo, his former PhD supervisor, for proposing the initial direction of this work and for insightful early guidance.
P.P.B. thanks the Brazilian agencies CNPq for the research grant PQ 303356/2022-7, and CAPES for the research grants STIC-AmSud  88881.694458/2022-01 and Mackenzie-PrInt 88887.310281/2018-00. G.K.B. acknowledges support from the Australian Research Council (ARC) Centre of Excellence for Engineered Quantum Systems (EQUS) (Grant No.
CE170100009).

%\bibliographystyle{unsrt}
%\bibliography{classQCA}

\begin{thebibliography}{10}

\bibitem{Fates}
Nazim Fat{\`e}s.
\newblock Stochastic cellular automata solutions to the density classification
  problem.
\newblock {\em Theory of Computing Systems}, 53(2):223--242, Aug 2013.

\bibitem{PPdens}
Pedro P.~B. de~Oliveira.
\newblock On density determination with cellular automata: Results,
  constructions and directions.
\newblock {\em Journal of Cellular Automata}, 9(5-6):357--385, 2014.

\bibitem{soldens}
Henryk Fuk\'{s}.
\newblock Solution of the density classification problem with two cellular
  automata rules.
\newblock {\em Phys. Rev. E}, 55:R2081--R2084, Mar 1997.

\bibitem{CAbook}
Bastien Chopard and Michel Droz.
\newblock {\em Cellular Automata Modeling of Physical Systems}.
\newblock Cambridge University Press, 2005.

\bibitem{NoSolclass}
Mark Land and Richard~K. Belew.
\newblock No perfect two-state cellular automata for density classification
  exists.
\newblock {\em Phys. Rev. Lett.}, 74:5148--5150, Jun 1995.

\bibitem{wolz2008very}
Dietmar Wolz and Pedro P.~B. de~Oliveira.
\newblock Very effective evolutionary techniques for searching cellular
  automata rule spaces.
\newblock {\em Journal of Cellular Automata}, 3(4), 2008.

\bibitem{gkl}
P{\'e}ter G{\'a}cs, Georgy~L. Kurdyumov, and Leonid~Anatolevich Levin.
\newblock One-dimensional uniform arrays that wash out finite islands.
\newblock {\em Problemy Peredachi Informatsii}, 14(3):92--96, 1978.

\bibitem{mitchell1994evolving}
Melanie Mitchell, James~P. Crutchfield, and Peter~T. Hraber.
\newblock Evolving cellular automata to perform computations: Mechanisms and
  impediments.
\newblock {\em Physica D: Nonlinear Phenomena}, 75(1-3):361--391, 1994.

\bibitem{QCAPerez}
Carlos A., Perez Delgado, and Donny Cheung.
\newblock Local unitary quantum cellular automata.
\newblock {\em Physical Review A}, 76(032320), 2007.

\bibitem{QCAmeyer}
David~A. Meyer.
\newblock From quantum cellular automata to quantum lattice gases.
\newblock {\em Journal of Statistical Physics}, 85:551--574, December 1996.

\bibitem{Watrous}
John Watrous.
\newblock On one-dimensional quantum cellular automata.
\newblock {\em In Proceedings of the 36th Symposium on Foundations of Computer
  Science}, pages 528--537, 1995.

\bibitem{Costa2018}
Pedro C.~S. Costa, Renato Portugal, and Fernando de~Melo.
\newblock Quantum walks via quantum cellular automata.
\newblock {\em Quantum Information Processing}, 17(9):226, Jul 2018.

\bibitem{Qgenetic}
Tina Yu and Radel Ben-Av.
\newblock Analysis of quantum particle automata for solving the density
  classification problem.
\newblock {\em Quantum Information Processing}, 14(4):1227--1247, 2015.

\bibitem{QCADenClass}
T.~L.~M. Guedes, D.~Winter, and M.~M\"uller.
\newblock Quantum cellular automata for quantum error correction and density
  classification.
\newblock {\em Phys. Rev. Lett.}, 133:150601, Oct 2024.

\bibitem{wagner2024density}
Elisabeth Wagner, Federico Dell’Anna, Ramil Nigmatullin, and Gavin~K Brennen.
\newblock Density classification with non-unitary quantum cellular automata.
\newblock {\em Entropy}, 27(1):26, 2024.

\bibitem{costa2021quantum}
Pedro~CS Costa.
\newblock Quantum-to-classical transition via quantum cellular automata.
\newblock {\em Quantum Information Processing}, 20(7):236, 2021.

\bibitem{ToffoliMargolus}
Tommaso Toffoli and Norman Margolous.
\newblock {\em Cellular Automata Machines}.
\newblock MIT Press Series in Scientific Computation, 1985.

\bibitem{CostaM20}
Pedro C.~S. Costa and Fernando de~Melo.
\newblock Coarse graining of partitioned cellular automata.
\newblock {\em Journal of Cellular Automata}, 15(4):305--311, 2020.

\bibitem{Dicke}
R.~H. Dicke.
\newblock Coherence in spontaneous radiation processes.
\newblock {\em Phys. Rev.}, 93:99--110, Jan 1954.

\bibitem{NielsenChuang}
Michael~A. Nielsen and Isaac~L. Chuang.
\newblock {\em Quantum Computation and Quantum Information}.
\newblock Cambridge University Press, 10th anniversary edition, 2010.

\bibitem{HarringtonThesis}
Jim Harrington.
\newblock {\em Analysis of quantum error-correcting codes: Symplectic lattice
  codes and toric codes}.
\newblock PhD thesis, California Institute of Technology, 2004.
\newblock Surface-code local decoder proposal.

\bibitem{Herold2015PRL}
Maximilian Herold, Earl~T. Campbell, Jens Eisert, and Michael~J. Kastoryano.
\newblock Cellular-automaton decoders for topological quantum memories.
\newblock {\em Physical Review Letters}, 114(9):090501, 2015.

\bibitem{DennisKitaevLandahlPreskill}
Eric Dennis, Alexei Kitaev, Andrew Landahl, and John Preskill.
\newblock Topological quantum memory.
\newblock {\em Journal of Mathematical Physics}, 43(9):4452--4505, 2002.

\end{thebibliography}

\appendix

\section{The Majorana representation of PUQCA}
\label{lemmaProof}
In this section, we first prove \Cref{lemm:Ising_QCA}, i.e., that the time evolution under the Ising Hamiltonian given in \Cref{eq:Isin_QCA} corresponds to the time evolution for the SU(2) PUQCA. Next, we prove \Cref{lem:ferm_halm} that shows how the Ising PUQCA can be mapped to the fermionic space.

\subsection{Proof of \Cref{lemm:Ising_QCA}}
\begin{proof}
Let us start by looking at $h_k$ for a given fixed point $k$ in the lattice and then at its matrix format,
\begin{equation}
\label{eq:Isin_QCA}
    h=\frac{J}{4}e^{i \theta}\sigma^{+} \otimes \sigma^{-} + \frac{J}{4}e^{-i \theta}\sigma^{-} \otimes \sigma^{+}+h_1\sigma^{z}\otimes I_2+h_2I_2\otimes \sigma_{z},
\end{equation}
Here, we explicitly wrote the tensor product between the operators as we want to look at its matrix format. Let us start with the matrix format for every two components and then add them. For the first two, we have
\begin{equation}
 \frac{J}{4}e^{i \theta}\sigma^{+} \otimes \sigma^{-} + \frac{J}{4}e^{-i \theta}\sigma^{-} \otimes \sigma^{+}=
\begin{bmatrix}0 & 0 & 0 & 0\\
0 & 0 &  Je^{i\theta}& 0\\
0 & Je^{-i\theta}& 0 & 0\\
0 & 0 & 0 & 0
\end{bmatrix},   
\end{equation}
and for the two others
\begin{equation}
h_1\sigma^{z}\otimes I_2+h_2I_2\otimes \sigma^{z}=
\begin{bmatrix}h_1+h_2 & 0 & 0 & 0\\
0 & h_1-h_2 &  0& 0\\
0 & 0& -h_1+h_2 & 0\\
0 & 0 & 0 & -h_1-h_2
\end{bmatrix},
\end{equation}
therefore,
\begin{equation}
\label{eq:h}
    h=\begin{bmatrix}h_1+h_2 & 0 & 0 & 0\\
0 & h_1-h_2 &  Je^{i\theta}& 0\\
0 & Je^{-i\theta}& -h_1+h_2 & 0\\
0 & 0 & 0 & -h_1-h_2
\end{bmatrix}.
\end{equation}
Our next step is the computation of the unitary $W$ generated by the Hamiltonian $h$ acting in the two neighbors' qubits, i.e.,
\begin{equation}
\label{eq:UnitW}
    W=e^{-i\tau h}=I_4-i\tau h-\frac{\tau^2}{2}h^2+\frac{i\tau^{3}}{3!}h^3+\frac{\tau^{4}}{4!}h^4+\cdots,
\end{equation}
where $I_4$ is a $4 \times 4$ identity matrix. The matrix form of $W$ can be determined by computing the elements $h^2,h^3$, and $h^4$, as we will see next. These four operators are given by, 

\begin{equation}
h^2=\begin{bmatrix}(h_1+h_2)^2 & 0 & 0 & 0\\
0 & (h_1 - h_2)^2 + J^2 & 0& 0\\
0 & 0& (h_1 - h_2)^2 + J^2 & 0\\
0 & 0 & 0 & (h_1 + h_2)^2
\end{bmatrix},
\end{equation}

\begin{equation}
h^3=\begin{bmatrix}(h_1+h_2)^3 & 0 & 0 & 0\\
0 & (h_1 - h_2) ((h_1 - h_2)^2 + J^2) & e^{i\theta} J ((h_1 - h_2)^2 + J^2)& 0\\
0 & e^{-i \theta} J ((h_1 - h_2)^2 + J^2)& (-h_1 + h_2) ((h_1 - h_2)^2 + J^2) & 0\\
0 & 0 & 0 & -(h_1 + h_2)^2
\end{bmatrix},
\end{equation}

\begin{equation}
h^4=\begin{bmatrix}(h_1+h_2)^4 & 0 & 0 & 0\\
0 & ((h_1 - h_2)^2 + J^2)^2 & 0& 0\\
0 & 0& ((h_1 - h_2)^2 + J^2)^2 & 0\\
0 & 0 & 0 & -(h_1 + h_2)^4
\end{bmatrix}.
\end{equation}
Now, if we make use of these matrices and use them in \Cref{eq:UnitW} it is easy to show that the first and the last elements of $W$ are given by 
\begin{equation}
    W_{1,1}=e^{i\phi}, \quad  W_{4,4}=e^{-i\phi},
\end{equation}
where $\phi=(h_1+h_2)\tau$. From the even powers elements of $h$ we can write
$\cos\Omega$, where 
\begin{equation}
    \Omega =\tau \sqrt{(h_1 - h_2)^2 + J^2}. 
\end{equation}
Now let us concentrate in the odd powers of $h$, but only for the element $W_{2,2}$,
\begin{align}
W_{2,2}&=-i (h_1-h_2)\tau+i(h_1 - h_2) \left[(h_1 - h_2)^2 + J^2\right]\frac{\tau^3}{3!}-i(h_1 - h_2) \left[(h_1 - h_2)^2 + J^2\right]^2\frac{\tau^5}{5!}+\cdots \nonumber\\
&=-i(h_1-h_2)\Bigg\{ \tau-\left[(h_1 - h_2)^2 + J^2\right]\frac{\tau^3}{3!} + \left[(h_1 - h_2)^2 + J^2\right]^2\frac{\tau^5}{5!}+\cdots\Bigg\}\nonumber\\
&=\frac{-i(h_1-h_2)}{\sqrt{(h_1 - h_2)^2 + J^2}}\Bigg\{ \sqrt{(h_1 - h_2)^2 + J^2} \tau-\left[(h_1 - h_2)^2 + J^2\right]^{3/2}\frac{\tau^3}{3!}\nonumber\\ 
&\quad+ \left[(h_1 - h_2)^2 + J^2\right]^{5/2}\frac{\tau^5}{5!}+\cdots\Bigg\}\nonumber\\
&=-i(h_1-h_2)\frac{\sin{\Omega}}{\Omega}.
\end{align}
Doing the same for the other two elements, we can conclude that    
\begin{equation}
W=\begin{bmatrix}e^{i\phi} & 0 & 0 & 0\\
0 & \cos{\Omega}-i(h_1-h_2)  \text{ sinc }{\Omega} &  -ie^{i\theta}J \text{ sinc }{\Omega}& 0\\
0 & -ie^{-i\theta}J\text{ sinc }{\Omega}& \cos{\Omega}+i(h_1-h_2) \text{ sinc }{\Omega} & 0\\
0 & 0 & 0 & e^{-i\phi}
\end{bmatrix},
\end{equation}
where $\text{ sinc }{\Omega}=1/\Omega \sin{\Omega}$ or 
\begin{equation}
W=\begin{bmatrix}e^{i\phi} & 0 & 0 & 0\\
0 & \cos{\Omega}-i\cos{\beta} \sin{\Omega} &  -ie^{i\theta}\sin\beta\sin{\Omega}& 0\\
0 & -ie^{-i\theta}\sin\beta\sin{\Omega}& \cos{\Omega}+i\cos{\beta} \sin{\Omega} & 0\\
0 & 0 & 0 & e^{-i\phi}
\end{bmatrix},
\end{equation}
where $\tan \beta = J/(h_1-h_2)$.
\end{proof}

\subsection{Proof of \Cref{lem:ferm_halm}}
\begin{proof}
The first thing that should be noticed is the following anti-commutation relations
\begin{equation}
\label{eq:new_anti}
    \{a_j^{\dagger},e^{i\pi n_j}\} = 0,\quad  \{a_j,e^{i\pi n_j}\} = 0,
\end{equation}
where $n_j=a_j^{\dagger}a_j$. The anti-commutation relations above can be derived from the fermionic algebra
\begin{equation}
\label{eq:anti_com}
    \{a_l^{\dagger},a_m^{\dagger}\} = 0,\quad  \{a_l,a_m\} = 0,\quad \{a_l^{\dagger},a_m\} = \delta_{l,m}I.
\end{equation}
Now, from \Cref{eq:stringOp} and \Cref{eq:new_anti} we can derive the main anti-commutation relations to complete the proof,
\begin{equation}
\{e^{i\phi_j}, a_l^{(\dagger)}\} = 0\quad (l<j),\quad \left[e^{i\phi_j}, a_l^{(\dagger)}\right] = 0\quad (l\geq j).
\end{equation}
The mapping from the $n$ qubits on a 1D lattice with spin operators to the $n$ spinless fermionic modes can be quickly done, as we will show next. Starting with the first component of \Cref{eq:QCAhamilt}, we have from \Cref{eq:mapping} that
\begin{align}
    \frac{J}{4}e^{i\theta} \sigma^+_k \sigma^-_{k+1} &= Je^{i\theta}a_k^\dagger e^{i \phi_k}a_{k+1} e^{-i \phi_{k+1}}\nonumber\\
    &=Je^{i\theta}a_k^\dagger e^{i (\phi_k-\phi_{k+1})}a_{k+1}\nonumber\\
    &=Je^{i\theta}a_k^\dagger e^{i\pi (\sum_{l=0}^{k-1}a_l^{\dagger}a_l-\sum_{l=0}^{k}a_l^{\dagger}a_l)}a_{k+1}\nonumber\\
    &=Je^{i\theta}a_k^\dagger e^{-i\pi n_k}a_{k+1}\nonumber\\
    &=Je^{i\theta}a_k^\dagger a_{k+1},
\end{align}
where in the last equality we used the following identity
\begin{equation}
    e^{-i\pi n_k} = I +2n_k. 
\end{equation}
The analogous calculation can be done for the component $\sigma^-_k \sigma^+_{k+1}$. Finally, the last two components of the Hamiltonian $h_k$, \Cref{eq:QCAhamilt}, can be checked easily by using the following identity $(\sigma_z)_j=1-2a^{\dagger}_ja_j$, which then completes our proof.
\end{proof}

\section{The classical simulable  PUQCA}\label{app:Ev_localQ}

In this section, we bring the proofs of \Cref{lem:heisen_sol,lemm:Ev_localQ} and \Cref{cor:QCA_Ev}. To that end, we need to derive the equation for the fermionic operator in the \textit{Heisenberg picture},
\begin{equation}
\frac{d}{dt}O(t) = i [H,O(t)],    
\end{equation}
where $O(t)$ is an operator Heisenberg picture and $H$ is the Hamiltonian.

\subsection{Proof of \Cref{lemm:Ev_localQ}}
\begin{proof}
 In the first case, we have the following Hamiltonian
\begin{equation}
 H = \sum_{l,m=1}^n h_{lm}a^{\dagger}_la_m.    
\end{equation}

So, our goal is to solve
\begin{equation}
\label{eq:heis}
\frac{d}{dt}a_{j}(t) = i\sum_{l,m=1}^n [h_{lm}a^{\dagger}_la_m ,a_j(t)].    
\end{equation}
We will compute the commutator in the Hamiltonian, considering the canonical anti-commutation relations given in \Cref{eq:anti_com}. At the end, what we need to compute is
\begin{align}
     [a^{\dagger}_la_m ,a_j] &= a^{\dagger}_la_ma_j - a_ja^{\dagger}_la_m\nonumber\\
     &= -a^{\dagger}_la_ja_m - a_ja^{\dagger}_la_m\nonumber\\
      &= -\delta_{lj}a_m + a_ja^{\dagger}_la_m - a_ja^{\dagger}_la_m.
\end{align}
Thus, by plugging the results above into \Cref{eq:heis} we get
\begin{equation}
    \frac{d}{dt}a_{j}(t) = -i\sum_{m=1}^n h_{jm}a_m(t),
\end{equation}
which is a linear system of ODE's, i.e.,
\begin{equation}
  \frac{d}{dt}\mathbf{a}(t) = ih\mathbf{a}(t),   
\end{equation}
where $\mathbf{a}(t)=[a_1(t),\cdots,a_n(t)]^T$. Therefore, the solution is given by 
\begin{equation}
 \mathbf{a}(t)  = e^{-iht}\mathbf{a}(0), 
\end{equation}
or in terms of its components, we have
\begin{equation}
 a_j(t)  = \sum_{k=1}^nA_{jk}a_k(0), 
\end{equation}
where
\begin{equation}
A =  e^{-iht}.  
\end{equation}

\subsection{Proof of \Cref{lem:heisen_sol}}

In the second case we have to solve the Heisenberg equation for 
\begin{equation}
 H_e = \sum_kh_{2k},  
\end{equation}
with 
\begin{equation}
h_k=Je^{i\theta} a_k a^{\dagger}_{k+1}+Je^{-i\theta} a_{k+1} a^{\dagger}_{k}+h_1(1-2a^{\dagger}_ka_k)+h_2(1-2a^{\dagger}_{k+1}a_{k+1}).
\end{equation}
In this case, we see that we have nontrivial solutions only in subspaces with two modes, i.e., we only have to consider the solutions for
\begin{align}
\frac{d}{dt}a_{j}(t) &= i[h_{j},a_j(t)]\nonumber\\  
\frac{d}{dt}a_{j+1}(t) &= i[h_{j},a_{j+1}(t)].
\end{align}
We will compute the commutator term by term in the Hamiltonian, considering the canonical anti-commutation relations given in \Cref{eq:anti_com}. For the first term in the Hamiltonian, we have
\begin{align}
Je^{i\theta} [a_ja^{\dagger}_{j+1},a_j] 
&= Je^{i\theta} \left(a_{j}a^{\dagger}_{j+1}a_j - a_j^2a^{\dagger}_{j+1}\right)\nonumber\\
&= -2Je^{i\theta}a^{\dagger}_{j+1}a^2_j\nonumber\\
&=0,
\end{align}
where we used the fact that $a_j^2=0$. Next, we have the term that involves $a_{j+1}a^{\dagger}_j$,
\begin{align}
Je^{-i\theta}  [a_{j+1}a^{\dagger}_{j},a_j] 
&= Je^{-i\theta} \left(a_{j+1}a^{\dagger}_{j}a_j - a_ja_{j+1}a^{\dagger}_{j} \right)\nonumber\\
&= Je^{-i\theta} \left(a_{j+1}a^{\dagger}_{j}a_j + a_{j+1}a_ja^{\dagger}_{j}\right)\nonumber\\
&=Je^{-i\theta}a_{j+1}.
\end{align}
We now move to the term that involves $1-2a_j^{\dagger}a_j$.
\begin{align}
h_1 [1-2a_{j}^{\dagger}a_{j},a_j] &=  -h_1[2a^{\dagger}_{j}a_{j},a_j] \nonumber\\
&=-2h_1\left(a^{\dagger}_{j}a^2_{j} - a_ja_j^{\dagger}a_j\right)\nonumber\\
&= 2h_1a_j.
\end{align}
Finally, we move to the analogous term, i.e., $1-2a_{j+1}^{\dagger}a_{j+1}$, which gives
\begin{equation}
h_2 [1-2a_{j+1}^{\dagger}a_{j+1},a_j] =0.  
\end{equation}
Thus, for the first mode we have
\begin{equation}
\frac{da_j(t)}{dt} =  iJe^{-i\theta}a_{j+1}(t) + 2ih_1a_j(t).    
\end{equation}
The same procedure has to be done for  $d a_{j+1}/dt$ that gives
\begin{equation}
\frac{da_{j+1}(t)}{dt} =  iJe^{i\theta}a_{j}(t) + 2ih_2a_{j+1}(t).    
\end{equation}
We then have a two-mode coupled ODE system to solve. If we define
\begin{equation}
\mathbf{a}^{(2)}_j(t) \coloneqq  \begin{bmatrix}a_{j}(t) \\
a_{j+1}(t)
\end{bmatrix}, 
\end{equation}
we have
\begin{equation}
 \frac{d}{dt} \mathbf{a}_j^{(2)}(t) =i M \mathbf{a}_j^{(2)}(t),   
\end{equation}
where
\begin{equation}
M =  \begin{bmatrix} 2h_1& Je^{-i\theta}\\
Je^{i\theta} & 2h_2
\end{bmatrix}.   
\end{equation}
So the solution is given by
\begin{equation}
 \mathbf{a}^{(2)}_j(t) = e^{iMt}\mathbf{a}^{(2)}_j(0).
\end{equation}
Now let us analyse the matrix $tM$. From the previous definitions, we have $\phi=(h_1+h_2)t$ and $\Omega = t\sqrt{(h_1-h_2)^2 + J^2}$. Considering both parameters, we can rewrite the matrix $M$ as
\begin{equation}
\label{eq:M}
tM = \phi I + tN,
\end{equation}
where 
\begin{equation}
N =  \begin{bmatrix} h_1 -h_2& Je^{-i\theta}\\
Je^{i\theta} & h_2- h_1
\end{bmatrix}.   
\end{equation}
To analyse $e^{iMt}$, we need to evaluate the powers of $M$, however, we can take advantage from the fact that 
\begin{equation}
    [\phi I, t N] =0,
\end{equation}
So
\begin{equation}
e^{iMt} = e^{i(\phi I + tN)}   = e^{i\phi I}e^{tN}.  
\end{equation}
Thus, from the simple algebraic calculation above, we notice that for the exponentiation of $N$ we get to the same submatrix given in \Cref{eq:h}, the elements of the subspace of one excitation. Therefore, we have,
\begin{equation}
e^{tN}=\begin{bmatrix}
 \cos{\Omega}+i\cos{\beta} \sin{\Omega} &  ie^{i\theta}\sin\beta\sin{\Omega}\\
 ie^{-i\theta}\sin\beta\sin{\Omega}& \cos{\Omega}-i\cos{\beta} \sin{\Omega} 
\end{bmatrix},
\end{equation}
where $\tan \beta = J/(h_1-h_2)$.   
\end{proof}
Therefore, the evolution of the two-mode system is given by
\begin{equation}
 \mathbf{a}^{(2)}_j(t) = 
 e^{i\phi}
 \begin{bmatrix}
 \cos{\Omega} + i \cos{\beta} \sin{\Omega} & i e^{i\theta} \sin{\beta} \sin{\Omega} \\
 i e^{-i\theta} \sin{\beta} \sin{\Omega} & \cos{\Omega} - i \cos{\beta} \sin{\Omega} 
 \end{bmatrix}
 \mathbf{a}^{(2)}_j(0).
\end{equation}
The expression above applies to an isolated two-mode system, which remains invariant under the evolution. Extending this to a chain of $n$ nodes, the global evolution reads
\begin{equation}
 \mathbf{a}(t) =
 e^{i\phi}
 I_{n/2} \otimes 
 \begin{bmatrix}
 \cos{\Omega} + i \cos{\beta} \sin{\Omega} & i e^{i\theta} \sin{\beta} \sin{\Omega} \\
 i e^{-i\theta} \sin{\beta} \sin{\Omega} & \cos{\Omega} - i \cos{\beta} \sin{\Omega} 
 \end{bmatrix}
 \mathbf{a}(0).
\end{equation}

\subsection{Proof of \Cref{cor:QCA_Ev}}

\begin{proof}
From the definition of the unitary cyclic shift matrix, we that the following relation is valid when $n$ is even
\begin{equation}
X_n=X_{n/2}\otimes \left(\begin{array}{cc}0 & 0 \\1 & 0\end{array}\right)+I_{n/2}\otimes \left(\begin{array}{cc}0 & 1 \\0 & 0\end{array}\right),    
\end{equation}
In the matrix above, we have for $X_2$ just the Pauli-$X$. We can now use the decomposition above to express $A_o = X_n A X_n^{\dagger}$, where 
\begin{equation}
A =
 I_{n/2} \otimes 
\begin{pmatrix}
a_2 &  b_2 \\
-b_2^{*} & a_2^{*}
\end{pmatrix},
\end{equation}
where we have ignored the global phase. Thus, we can write
\begin{align}
X_n A X_n^{\dagger} &=
\Bigg[X_{n/2}\otimes \left(\begin{array}{cc}0 & 0 \\1 & 0\end{array}\right) + I_{n/2}\otimes \left(\begin{array}{cc}0 & 1 \\0 & 0\end{array}\right)\Bigg]  \, I_{n/2} \otimes 
\begin{pmatrix}
a_2 & b_2 \\
-b_2^{*} & a_2^{*}
\end{pmatrix}\nonumber\\
&\quad\times\Bigg[X_{n/2}\otimes \left(\begin{array}{cc}0 & 0 \\1 & 0\end{array}\right)+I_{n/2}\otimes \left(\begin{array}{cc}0 & 1 \\0 & 0\end{array}\right)\Bigg]^{\dagger}\nonumber\\
&=\Bigg[X_{n/2}\otimes\begin{pmatrix}
0 & 0\\
a_2 & b_2 
\end{pmatrix} +I_{n/2}\otimes \left(\begin{array}{cc}-b_2^{*} & a_2^{*} \\0 & 0\end{array}\right)\Bigg]\Bigg[X^{\dagger}_{n/2}\otimes \left(\begin{array}{cc}0 & 1 \\0 & 0\end{array}\right)+I_{n/2}\otimes \left(\begin{array}{cc}0 & 0 \\1 & 0\end{array}\right)\Bigg]\nonumber\\
&= I_{n/2}\otimes\begin{pmatrix}
a_2^{*} & 0\\
0 & a_2 
\end{pmatrix} + X_{n/2}\otimes\begin{pmatrix}
0 & 0\\
b_2 & 0 
\end{pmatrix} +  X^{\dagger}_{n/2}\otimes\begin{pmatrix}
0 & -  b_2^{*}\\
0 & 0 
\end{pmatrix}.
\end{align}
We can now use the results above to compute the full state, i.e.,
\begin{align}
 \mathcal{A} &= A_o A_e\nonumber\\
 &= \Bigg[I_{n/2}\otimes\begin{pmatrix}
a_2^{*} & 0 \\
0 & a_2 
\end{pmatrix} + X_{n/2}\otimes\begin{pmatrix}
0 & 0\\
b_2 & 0 
\end{pmatrix} +  X^{\dagger}_{n/2}\otimes\begin{pmatrix}
0 & -  b_2^{*}\\
0 & 0 
\end{pmatrix}\Bigg] I_{n/2} \otimes 
\begin{pmatrix}
a_1 & b_1 \\
-b_1^{*} & a_1^{*}
\end{pmatrix}\nonumber\\
&=I_{n/2}\otimes\begin{pmatrix}
a_2^{*}a_1 & a^{*}_2b_1 \\
-a_2b_1^{*} & a_2a_1^{*} 
\end{pmatrix} + X_{n/2}\otimes\begin{pmatrix}
0 & 0\\
b_2a_1 & b_2b_1 
\end{pmatrix} +  X^{\dagger}_{n/2}\otimes\begin{pmatrix}
b_2^{*}b_1^{*} & -  b_2^{*}a_1^{*}\\
0 & 0 
\end{pmatrix},
\end{align}
From the properties of the cyclic operator $U$ its eigenvalues are $\lambda_k =\exp{(2\pi ik/n)}$, with the eigenvectors given by the Fourier basis state, i.e.,
\begin{equation}
 \ket{k} = \frac{1}{\sqrt{n}}\sum_{j,k=0}^{n-1} e^{2\pi i j k/n}\ket{j}, 
\end{equation}
so
\begin{equation}
X_{n/2} = F_{n/2} \text{diag}\left(e^{4\pi i k/n}\right) F^{\dagger}_{n/2},  
\end{equation}
where
\begin{equation}
F_n=\frac{1}{\sqrt{n}}\sum_{j,k=0}^{n-1} e^{2\pi i j k/n}\ket{j}\bra{k}   
\end{equation}
is the discrete Fourier matrix.
Thus, we can bring $A$ to a simpler diagonal form 
\begin{align}
\mathcal{A} &=  (F_{n/2}\otimes I_2)\Bigg[ I_{n/2}\otimes\begin{pmatrix}
a_2^{*}a_1 & a^{*}_2b_1 \\
-a_2b_1^{*} & a_2a_1^{*}
\end{pmatrix} + \sum_{k=0}^{n/2-1}e^{4\pi i k/n}\ket{k}\bra{k} \otimes\begin{pmatrix}
0 & 0\\
b_2a_1 & b_2b_1 
\end{pmatrix}\nonumber\\ 
&\quad +  \sum_{k=0}^{n/2-1}e^{-4\pi i k/n}\ket{k}\bra{k}\otimes\begin{pmatrix}
b_2^{*}b_1^{*} & -  b_2^{*}a_1^{*}\\
0 & 0 
\end{pmatrix} \Bigg] (F^{\dagger}_{n/2}\otimes I_2) \nonumber\\
&=[F_{n/2}\otimes I_2] \Bigg[\sum_{k=0}^{n/2-1}\ket{k}\bra{k}\otimes [M(k)]\Bigg] [F^{\dagger}_{n/2}\otimes I_2],
\end{align}
where
\begin{equation}
M(k)=\left(\begin{array}{cc}a_2^{*}a_1 +b_2^*b_1^*e^{-i 4\pi k/n} &a^{*}_2b_1  - b_2^{*}a_1^* e^{-4i\pi k/n} \\a_1b_2 e^{i 4\pi k/n} -a_2b^{*}_1& b_2b_1 e^{i 4\pi k/n} + a_2a_1^{*}\end{array}\right).
\end{equation}
\end{proof}

\section{PUQCA probabilities in the Majorana representation }
\label{app:Prob}

To proceed with the computation for the probabilities in the Fermionic picture, let us now use the results from \Cref{cor:QCA_Ev} together 
\begin{align}
\mathcal{A}^{t}P_S \mathcal{A}^{t\dagger} & =[F_{n/2}\otimes I_2] \Bigg[\sum_{k=0}^{n/2-1}\ket{k}\bra{k}\otimes [M(k)]^t\Bigg] [F^{\dagger}_{n/2}\otimes I_2] \sum_{(c_s,c_p)\in S}\ket{c_s}\bra{c_s}\otimes \ket{c_p}\bra{c_p} [F_{n/2}\otimes I_2]\nonumber\\
&\quad\times \Bigg[\sum_{k'=0}^{n/2-1}\ket{k'}\bra{k'}\otimes [M(k')]^{\dagger t}\Bigg] [F_{n/2}^{\dagger}\otimes I_2].
\end{align}

To proceed with this calculation, let us break it into parts. First, using the expression for the Fourier transformation, 
\begin{equation}
F_n=\frac{1}{\sqrt{n}}\sum_{j,l=0}^{n-1} e^{2\pi i j l/n}\ket{j}\bra{l},   
\end{equation}
we have
\begin{align}
[F^{\dagger}_{n/2}\otimes I_2] \sum_{(c_s,c_p)\in S}\ket{c_s}\bra{c_s}\otimes \ket{c_p}\bra{c_p} [F_{n/2}\otimes I_2] &= \left(\frac{1}{n/2}\sum_{l=0}^{n/2-1} \sum_{(c_s,c_p)\in S} e^{-4\pi i l c_s/n}\ket{l}\bra{c_s} \otimes \ket{c_p}\bra{c_p}\right)\nonumber\\
&\quad \times\sum_{j_1,l_1=0}^{n/2-1} e^{4\pi i j_1 l_1/n}\ket{j_1}\bra{l_1}\otimes I_2\nonumber\\
&= \frac{1}{n/2}\sum_{l,j_1=0}^{n/2-1} \sum_{(c_s,c_p)\in S} e^{-4\pi i (l-l_1) c_s/n}\ket{l}\bra{l_1} \otimes \ket{c_p}\bra{c_p}.
\end{align}
Thus,
\begin{align}
\mathcal{A}^{t}P_S \mathcal{A}^{t\dagger} &= [F_{n/2}\otimes I_2] \Bigg[\sum_{k=0}^{n/2-1}\ket{k}\bra{k}\otimes [M(k)]^t\Bigg] \left(\frac{1}{n/2}\sum_{j_1,l=0}^{n/2-1} \sum_{(c_s,c_p)\in S} e^{-4\pi i (l-l_1) c_s/n}\ket{l}\bra{l_1} \otimes \ket{c_p}\bra{c_p}\right)\nonumber\\
&\quad\times \Bigg[\sum_{k'=0}^{n/2-1}\ket{k'}\bra{k'}\otimes [M(k')]^{\dagger t}\Bigg] [F_{n/2}^{\dagger}\otimes I_2]\nonumber\\
&=\frac{2}{n} [F_{n/2}\otimes I_2]\sum_{k,k'=0}^{n/2-1}  \sum_{(c_s,c_p)\in S} e^{-4\pi i (k-k') c_s/n}\ket{k}\bra{k'} \otimes [M(k)]^t\ket{c_p}\bra{c_p} [M(k')]^{\dagger t}[F_{n/2}^{\dagger}\otimes I_2]\nonumber\\
&= \frac{4}{n^2}\sum_{k,k'=0}^{n/2-1}  \sum_{(c_s,c_p)\in S}\sum_{j_1,l_1=0}^{n/2-1}\sum_{j_2,l_2=0}^{n/2-1} e^{-4\pi i (k-k') c_s/n}\left(e^{4\pi ij_1l_1}\ket{j_1}\bra{l_1}\otimes I_2\right)\nonumber\\
&\quad\times\left(\ket{k}\bra{k'} \otimes [M(k)]^t\ket{c_p}\bra{c_p} [M(k')]^{\dagger t}\right)\left(e^{-4\pi ij_2l_2}\ket{l_2}\bra{j_2}\otimes I_2\right)\nonumber\\
&=\frac{4}{n^2}\sum_{k,k'=0}^{n/2-1}  \sum_{(c_s,c_p)\in S}\sum_{j,l=0}^{n/2-1} e^{-4\pi i (k-k') c_s/n}e^{4\pi i(jk-lk')/n}\ket{j}\bra{l}\otimes [M(k)]^t\ket{c_p}\bra{c_p} [M(k')]^{\dagger t}.
\end{align}
Thus, for an arbitrary location $(s,p)$, we have

\begin{equation}
(\mathcal{A}^{t}P_S \mathcal{A}^{t\dagger})_{(j_s,j_p),(j_s,j_p)} = \frac{4}{n^2}\sum_{k,k'=0}^{n/2-1}  \sum_{(c_s,c_p)\in S} e^{-4\pi i (k-k') c_s/n}e^{4\pi i j_s(k-k')/n} \bra{j_p}[M(k)]^t\ket{c_p}\bra{c_p} [M(k')]^{\dagger t} \ket{j_p}.   
\end{equation}

\end{document}